\documentclass[
    aps, prl, reprint, superscriptaddress, amsfonts,
]{revtex4-2}

\usepackage{hyperref}
\usepackage[dvipsnames]{xcolor}
\usepackage[export]{adjustbox}
\usepackage{mathtools}
\usepackage{physics}
\usepackage{bm}
\usepackage{bbm}
\usepackage{amsthm}
\usepackage{wrapfig}

\makeatletter
\let\DOTSI\relax
\newcommand*{\Sint}{%
  \DOTSI
  \mathop{%
    \mathpalette\@LetterOnInt{\Sigma}%
  }%
  \mkern-\thinmuskip
  \int
}
\newcommand*{\@LetterOnInt}[2]{%
  \sbox0{$#1\int\m@th$}%
  \sbox2{$%
    \ifx#1\displaystyle
      \textstyle
    \else
      \scriptscriptstyle
    \fi
    #2%
  \m@th$}%
  \dimen@=.4\dimexpr\ht0+\dp0\relax
  \ifdim\dimexpr\ht2+\dp2\relax>\dimen@
    \sbox2{\resizebox*{!}{\dimen@}{\unhcopy2}}%
  \fi
  \dimen@=\wd0 %
  \ifdim\wd2>\dimen@
    \dimen@=\wd2 %
  \fi
  \rlap{\hbox to \dimen@{\hfil
    $#1\vcenter{\copy2}\m@th$%
  \hfil}}%
  \ifdim\dimen@>\wd0 %
    \kern.5\dimexpr\dimen@-\wd0\relax
  \fi
}
\makeatother

\newtheorem{theorem}{Theorem}[section]
\newtheorem{lemma}[theorem]{Lemma}
\newtheorem{corollary}[theorem]{Corollary}

\begin{document}

\title{Exact non-Markovian master equations: a generalized derivation for Gaussian systems}

\author{Antonio D'Abbruzzo}
\email{antonio.dabbruzzo@sns.it}
\affiliation{Scuola Normale Superiore, Pisa, Italy}

\author{Vittorio Giovannetti}
\email{vittorio.giovannetti@sns.it}
\affiliation{Scuola Normale Superiore, Pisa, Italy}
\affiliation{NEST and Istituto Nanoscienze-CNR, Pisa, Italy}

\author{Vasco Cavina}
\email{vasco.cavina@sns.it}
\affiliation{Scuola Normale Superiore, Pisa, Italy}

\date{\today}

\begin{abstract}
    We derive an exact master equation that captures the dynamics of a quadratic quantum system linearly coupled to a Gaussian environment of the same statistics: the Gaussian Master Equation (GME).
    Unlike previous approaches, our formulation applies universally to both bosonic and fermionic setups, and remains valid even in the presence of initial system-environment correlations, allowing for the exact computation of the system's reduced density matrix across all parameter regimes.
    Remarkably, the GME shares the same operatorial structure as the Redfield equation and depends on a single kernel---a dressed environment correlation function accounting for all virtual interactions between the system and the environment.
    This simple structure grants a clear physical interpretation and makes the GME easy to simulate numerically, as we show by applying it to an open system based on two fermions coupled via superconductive pairing.
\end{abstract}

\maketitle


\textit{Introduction.}---Finding convenient ways to unveil the dynamics of open quantum systems~\cite{Breuer2007Book} starting from an effective microscopic description is arguably one of the most important problems in contemporary quantum physics.
In the Markovian regime, where a continuous flow of information from the system to the environment is assumed, the well-known LGKS theorem~\cite{Lindblad1976Classic,Gorini1976Classic} provides the general structure of the master equation and the question of linking it to microscopic parameters has been thoroughly studied~\cite{Davies1974Markovian,Dumcke1979Proper,Rivas2010Markovian}.
However, such characterization is missing in the non-Markovian regime and a heterogeneous variety of approaches has been proposed over the years to still find useful predictions in practical scenarios~\cite{Breuer2016Review,deVega2017Review,Li2020Perspective}.
Such approaches are mostly designed with numerics in mind and they smartly bypass the problem of writing down explicitly the master equation for the system.

The situation is different if one assumes that the system has a quadratic free Hamiltonian while being linearly coupled to a Gaussian environment, which is completely characterized by one- and two-point correlation functions only~\cite{Serafini2023Book,Surace2022FermionGauss}.
In this case, exact master equations can be found in the literature~\cite{Hu1992Classic,Zhang2012ME,Yang2013ME,Yang2016Review,Ferialdi2016ME}, but they are typically restricted to excitation-preserving interactions (with notable specific exceptions, such as the Hu-Paz-Zhang equation~\cite{Hu1992Classic}).
When this is the result of a rotating-wave approximation, it is however known that inaccuracies can be found when trying to capture energy shifts and non-Markovian effects~\cite{Fleming2010RWA,Makela2013RWA,Wu2017FermionRWA}.
A master equation that is free of this assumption has been derived in Ref.~\cite{Ferialdi2016ME} (generalizing Ref.~\cite{Hu1992Classic}), but with few caveats.
First, it is only valid for bosonic particles; an extension to fermionic systems is needed to study, e.g., systems with superconducting pairing~\cite{Kitaev2001Majorana,Ribeiro2015NonMarkov,DAbbruzzo2021Kitaev} or spin-fermion Hamiltonians~\cite{Wu2017FermionRWA}.
Second, it requires system and environment to be uncorrelated at the initial time, which may not be experimentally feasible in the strong-coupling regime~\cite{deVega2017Review,Grabert1988Init,McCaul2017Partition,Los2024Initial}.

In this Letter, we derive a master equation---the Gaussian Master Equation (GME)---that is uniformly valid for all Gaussian systems, with no need of further assumptions beyond Gaussianity.
Our approach applies to both the bosonic and fermionic cases~\footnote{Note that the master equation does not apply to hybrid quadratic models in which fermions are linearly coupled to bosons. However, these models are not Gaussian, as higher-order correlators cannot be expressed in terms of two-point correlators---see, for instance, the Jaynes-Cummings model~\cite{Breuer2007Book}.}, accommodates arbitrary forms of the system-environment coupling, and remains valid in presence of initial correlations between the system and the environment.
The GME exhibits a remarkably simple and compact structure, being formally identical to the Redfield equation~\cite{Breuer2007Book}.
However, its kernel is not the bare environmental correlation function, but a dressed one that incorporates additional virtual interactions between the system and the environment.
This formal correspondence has a great impact when trying to solve the GME in practice.
This marks a difference from other important approaches in which the dependence of the proposed equation on microscopic parameters is arguably more involved \cite{Ferialdi2014Gaussian,Cirio2022Influence,Huang2024Gaussian}.

The manuscript is organized as follows. In the first two sections we discuss two key steps in the derivation of the GME. The first is the shift of the problem’s complexity from the operator structure to the time-integration domain, achieved through the Schwinger–Keldysh contour technique~\cite{Schwinger1961Classic,Keldysh1964Classic}.
This method, widely used in many-body physics for nonequilibrium Green’s functions~\cite{Kamenev2011Book,Stefanucci2013Book} and more recently in stochastic unravelings and master equations~\cite{Karanikas2024Fermions,Reyes2024Keldysh,Reimer2019Keldysh,Cavina2024Stochastic}, allows us to include initial system–environment correlations without significantly increasing the complexity of the derivation.
The second step is a recursive application of Wick's theorem to the environment and system operators, yielding a series whose terms share the same superoperatorial structure and can be resummed into a Redfield-like equation.
We conclude the Letter by putting the GME to the test and by applying it to a fermionic system with superconducting pairing, a case that is not addressable by previous exact master equations.
This example aims to show how the numerical approach to this equation is simple and accessible to a broad audience.
Further clues on possible strategies to simplify the numerics (e.g., third quantization) are contained in the Supplemental Material~\cite{SM}.


\textit{Contour dynamics.}---Let us consider a quantum system and an environment, both made of noninteracting particles of the same statistical type (i.e., both bosonic or both fermionic).
The total Hamiltonian $H$ is
\begin{equation}
    H = H_0 + V,
    \quad
    H_0 = H_S + H_E.
\end{equation}
Here, $H_0$ is the free term, containing an environment Hamiltonian $H_E$ and a system Hamiltonian $H_S$.
Instead, $V$ is the system-environment interaction term, which we take to be of the general form
\begin{equation} \label{eq:V}
    V = \sum_\alpha A_\alpha \otimes B_\alpha,
\end{equation}
with $A_\alpha$ being a system operator and $B_\alpha$ being an environment operator.
Contrary to common practices, we do not take $A_\alpha$ and $B_\alpha$ necessarily Hermitian, even though they must be defined such that $V$ is Hermitian.
The following calculations can also be immediately generalized to the case where $H$ contains some explicit time dependence.

Given the initial state of the system-environment compound $\rho_{SE}(0)$, the state at subsequent time $t \geq 0$ in interaction picture writes
\begin{equation} \label{eq:rho_total}
    \varrho_{SE}(t) = U_I(t,0) \rho_{SE}(0) U_I^\dag(t,0),
\end{equation}
where $U_I(t,0) = \mathbb{T} \mathrm{exp}(-i \int_0^t \mathcal{V}(\tau) \dd{\tau})$ is the time-ordered evolution operator and $\mathcal{V}(\tau) = U_0^\dag(\tau,0) V U_0(\tau,0)$, with $U_0$ being the free propagator defined by $H_0$.
In this paper, we will work in units such that $\hbar = k_B = 1$.

Suppose now that the exponentials are expanded.
The operators that appear are ordered so that, when going from right to left, the time argument first moves from $t$ to $0$ and then from $0$ to $t$.
Let us then introduce the contour $\gamma(t) \coloneqq \gamma_+(t) \oplus \gamma_-(t)$ in Fig.~\ref{fig:contour}, where $\gamma_\pm(t)$ are, respectively, the tracks $t \to 0$ and $0 \to t$ (called ``backward'' and ``forward'' branches~\cite{Stefanucci2013Book}).
We can then write Eq.~\eqref{eq:rho_total} using a single exponential as follows:
\begin{equation} \label{eq:rho_contour_temp}
    \varrho_{SE}(t) = \mathbb{T}\qty{ \exp[ -i\int_{\gamma(t)} \mathcal{V}(z) \dd{z} ] \rho_{SE}(0) },
\end{equation}
where $\mathcal{V}(z)$ is defined to be equal to $\mathcal{V}(\tau)$ on both branches, and $\mathbb{T}$ works as the usual ordering operator, but defined on the contour $\gamma(t)$ (i.e., operators with argument $z$ that are ``later'' on $\gamma(t)$, are placed to the left, see also Refs.~\cite{Stefanucci2013Book,Cavina2024Stochastic}).

\begin{figure}
    \includegraphics[scale=0.16,valign=t]{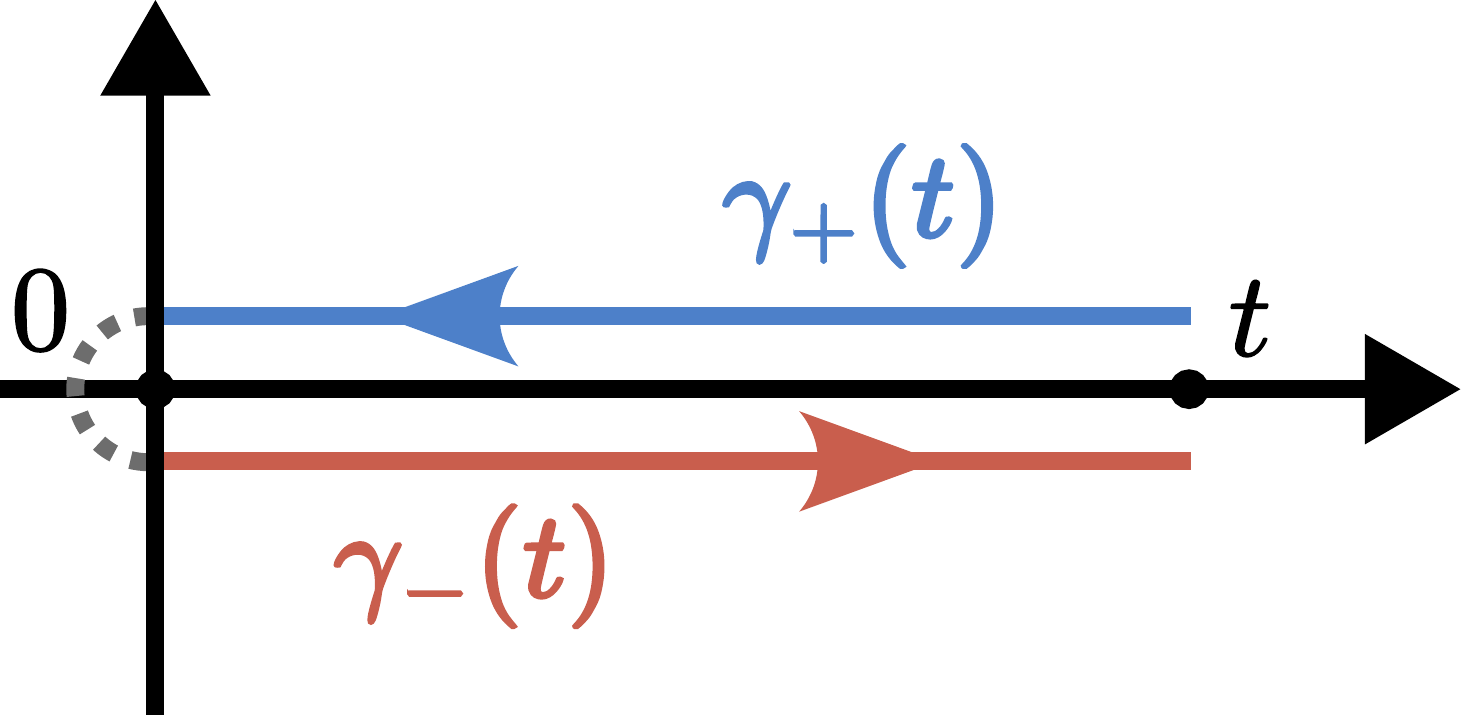}
    \quad
    \includegraphics[scale=0.16,valign=t]{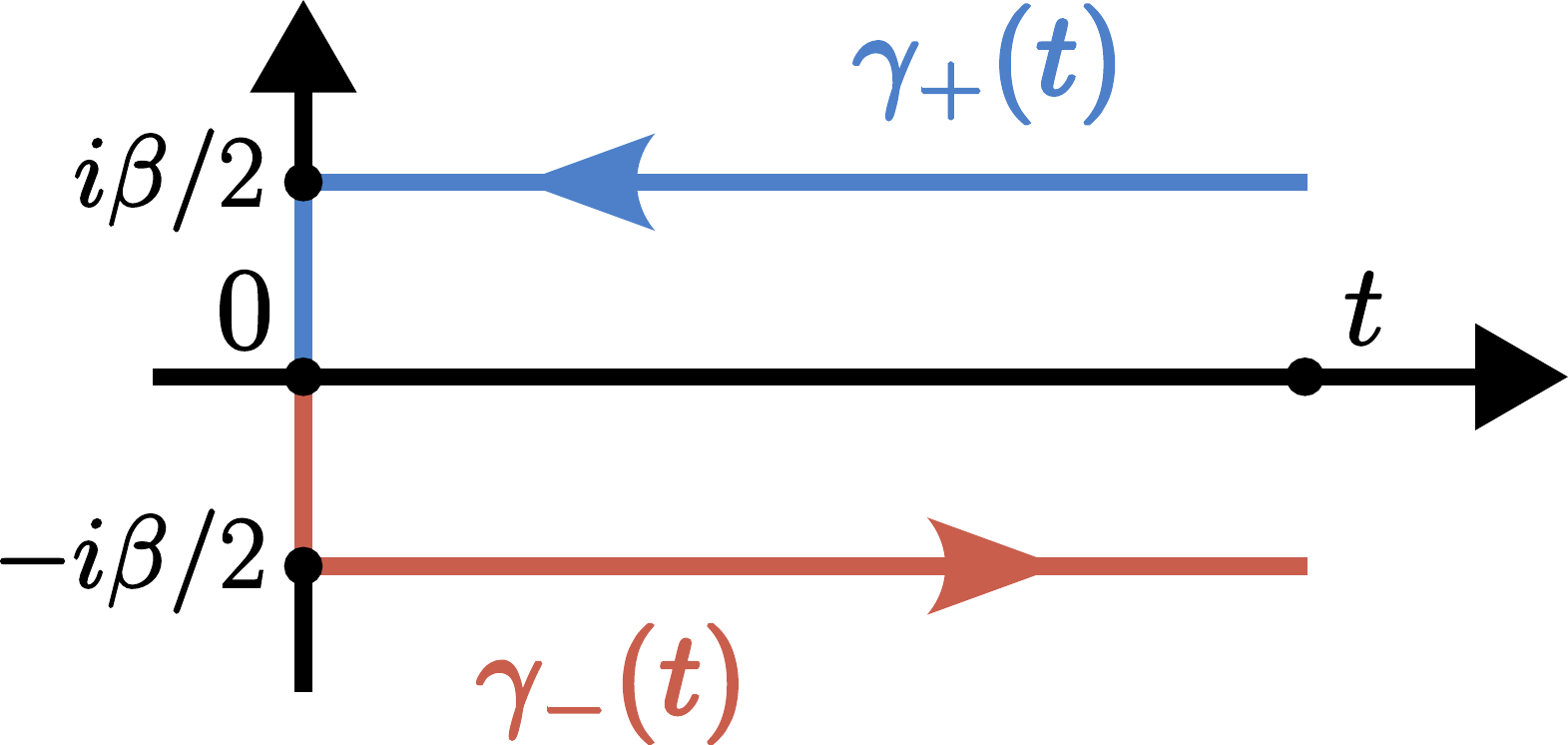}
    \caption{\label{fig:contour}
        Schematic depiction of the contours considered in this paper.
        On the left, $\gamma_+(t)$ goes from $t$ to $0$ while $\gamma_-(t)$ goes from $0$ to $t$: this is used for factorized initial states.
        On the right, $\gamma_\pm(t)$ is shifted by $\pm i \beta/2$ and a vertical track is added that runs from $i\beta/2$ to $-i\beta/2$: this is used for the generic Gaussian initial state in Eq.~\eqref{eq:rho_0}.
    }
\end{figure}

In case $\rho_{SE}(0)$ is factorized, one can immediately take the partial trace over the environment in Eq.~\eqref{eq:rho_contour_temp}, using also the decomposition $\mathcal{V}(z) = \sum_\alpha A_\alpha(z) \otimes B_\alpha(z)$ that comes from Eq.~\eqref{eq:V}.
Unfortunately, this cannot be done for a general $\rho_{SE}(0)$.
However, we can make progress if we assume that $\rho_{SE}(0)$ is a Gaussian state:
\begin{equation} \label{eq:rho_0}
    \rho_{SE}(0) = \frac{e^{-\beta H^M}}{Z_0},
    \qquad
    Z_0 = \Tr[e^{-\beta H^M}],
\end{equation}
where $\beta > 0$ and $H^M$ is a quadratic Hamiltonian in the system-environment variables.
Such initial preparation can be formulated as an imaginary-time evolution, which can then be incorporated into Eq.~\eqref{eq:rho_contour_temp} with a modification of the contour.
Specifically, let us shift $\gamma_\pm(t)$ by $\pm i \beta/2$ and add a vertical track running from $i \beta/2$ to $-i\beta/2$, as illustrated in Fig.~\ref{fig:contour}.
Moreover, divide $H^M = H^M_0 + V^M$ into a free term and an interacting term, and define a new contour interaction $V(z)$ which is equal to $V$ on the horizontal tracks and to $V^M$ on the vertical one.
Then, one can show that~\cite{SM}
\begin{equation}
    \varrho_{SE}(t) = \mathbb{T}\qty{ \exp[-i \int_{\gamma(t)} \mathcal{V}(z) \dd{z}] \frac{e^{-\beta H^M_0}}{Z_0} },
\end{equation}
where this time
\begin{equation} \label{eq:Vz}
    \mathcal{V}(z) = \begin{cases}
        W_0(-i \frac{\beta}{2},z) V(z) W_0(z, -i \frac{\beta}{2}) & z \in \gamma_-(t), \\
        W_0(i \frac{\beta}{2}, z) V(z) W_0(z, i \frac{\beta}{2}) & z \in \gamma_+(t),
    \end{cases}
\end{equation}
and $W_0$ is the contour generalization of the free propagator $U_0$ employing the free contour Hamiltonian $H_0(z)$, which is equal to $H_0$ on the horizontal tracks and to $H_0^M$ on the vertical one.

Writing $\mathcal{V}(z) = \sum_\alpha A_\alpha(z) \otimes B_\alpha(z)$, it is now immediate to take the partial trace over the environment to write the system's state $\varrho(t)$ as
\begin{align}
    \varrho(t) &= \sum_{n=0}^\infty \frac{(-i)^n}{n!} \Sint_{\gamma(t)} \dd[n]{\mathbf z} \Tr[ \mathbb{T}\qty{ B_1 \ldots B_n \Omega_0 } ] \notag \\
    &\times \mathbb{T}\qty{ A_1 \ldots A_n R_0 },
    \label{eq:rho_contour}
\end{align}
where we introduced the abbreviations $A_i \equiv A_{\alpha_i}(z_i)$ and $B_i \equiv B_{\alpha_i}(z_i)$, and the sum symbol superimposed on the integral is a reminder of the fact that we should also sum over the $\alpha$ indices.
We additionally defined
\begin{equation}
    \Omega_0 \coloneqq \frac{e^{-\beta H_E^M}}{\Tr\,[ e^{-\beta H_E^M} ]},
    \quad
    R_0 \coloneqq \frac{e^{-\beta H_S^M}}{Z_0/\Tr\,[ e^{-\beta H_E^M} ]},
\end{equation}
where $H^M_0 = H^M_S + H^M_E$.
Eq.~\eqref{eq:rho_contour} is true even in the fermionic case despite the known issues in dealing with the partial trace in composite fermionic systems, since $\exp[-\beta H^M_0]$ is even in the number of fermionic ladder operators~\cite{Cirio2022Influence}.
Note also that Eq.~\eqref{eq:rho_contour} is structurally identical to what one would obtain by taking the partial trace in Eq.~\eqref{eq:rho_contour_temp} in case of factorized initial state $\rho_{SE}(0) = R_0 \Omega_0$.


\textit{Wick's theorem.}---Since $H_E$ is assumed to be quadratic and since the $B_i$ operators are linear in the environment's ladder operators [the transformation in~\eqref{eq:Vz} does not alter this fact], we can invoke Wick's theorem~\cite{Serafini2023Book,Surace2022FermionGauss,Stefanucci2013Book,Pires2012Wick,Ferialdi2021Wick} to decompose the trace in Eq.~\eqref{eq:rho_contour}.

Let $\zeta \in \{-1,1\}$ be a parameter indicating whether we are dealing with bosons ($\zeta = 1$) or fermions ($\zeta = -1$).
We introduce the symbol $\mathbb{T}_\zeta$: in case $\zeta = 1$ it is equal to $\mathbb{T}$, but in case $\zeta = -1$ it stands for a ``fermionic'' ordering operation, which introduces a $(-1)$ factor every time a transposition is made to order its arguments.
The idea is that the same number of transpositions is required to order $B_1 \ldots B_n \Omega_0$ and $A_1 \ldots A_n R_0$, hence we can harmlessly make the substitution $\mathbb{T} \mapsto \mathbb{T}_\zeta$ in Eq.~\eqref{eq:rho_contour}.
We can then write
\begin{multline} \label{eq:wick}
    \Tr[ \mathbb{T}_\zeta \qty{ B_1 \ldots B_{2m} \Omega_0 } ] \\
    = \frac{1}{m! 2^m} \sum_{\sigma \in \mathfrak{S}_{2m}} \zeta^{N(\sigma)} \mathcal{C}_{\sigma(1),\sigma(2)} \ldots \mathcal{C}_{\sigma(2m-1),\sigma(2m)},
\end{multline}
where $\mathfrak{S}_{2m}$ is the set of permutations of $\{1,\ldots,2m\}$, $N(\sigma)$ is the number of inversions in $\sigma$, and
\begin{equation}
    \mathcal{C}_{i,j} \coloneqq \Tr[ \mathbb{T}_\zeta \qty{ B_i B_j \Omega_0 } ]
\end{equation}
is the environment's correlation function on the contour~\cite{SM}.
Note that Eq.~\eqref{eq:wick} is a straightforward generalization of Wick's theorem on the contour, and no special care is needed, as happens in superoperator formulations~\cite{Cirio2022Influence}.
Wick's theorem also guarantees that $\Tr[ \mathbb{T}_\zeta \qty{ B_1 \ldots B_{2m+1} \Omega_0 } ] = 0$ if we assume as usual that $\Tr[B_i \Omega_0] = 0$, hence we can restrict to the even case.

When substituting Eq.~\eqref{eq:wick} into Eq.~\eqref{eq:rho_contour} we observe with a change of variables that every term of the sum over $\sigma$ contributes in the same way, and that the factor $\zeta^{N(\sigma)}$ is canceled by a reordering on the system side~\cite{SM}.
Therefore, we end up with
\begin{gather}
    \varrho(t) = \sum_{m=0}^\infty \frac{(-1)^m}{m! 2^m} M_m(t), \label{eq:rho_M} \\
    M_m(t) \! \coloneqq \! \Sint_{\gamma(t)} \!\!\! \dd[2m]{\mathbf z} \mathcal{C}_{1,2} \ldots \mathcal{C}_{2m-1,2m} \mathbb{T}_\zeta \qty{ A_1 \ldots A_{2m} R_0 }. \label{eq:M}
\end{gather}


\textit{Exact master equation.}---In order to derive an exact master equation for $\varrho(t)$ we need two tools.
The first arises because of the need to take the time derivative of $\varrho(t)$, and is a ``fundamental theorem of  calculus'' for multidimensional integrals on $\gamma(t)$.
Applying repeatedly the Leibniz integral rule, one finds~\cite{SM}
\begin{equation} \label{eq:derivmany}
    \dv{t} \int_{\gamma(t)} \!\! \dd[n]{\mathbf z} f(\mathbf{z}) =
    n \int_{\gamma(t)} \!\!\! \dd[n-1]{\mathbf z} \qty[ f(t^-,\mathbf{z}) - f(t^+,\mathbf{z}) ],
\end{equation}
where $\tau^\pm$ is the contour point on the horizontal part of $\gamma_\pm(t)$ corresponding to the real time $\tau$, and $f$ is a general contour function, assuming the relative position of the arguments of $f$ to be irrelevant upon integration.

Using appropriate changes of variables and the properties of $\mathbb{T}_\zeta$, it is possible to show that the integrand in Eq.~\eqref{eq:M} satisfies this requirement~\cite{SM}.
We can thus apply Eq.~\eqref{eq:derivmany} to perform the time derivative of $M_m(t)$, resulting in
\begin{multline} \label{eq:M_der}
    \dv{t} M_m(t) = 2m \Sint_{\gamma(t)} \dd[2m-1]{\mathbf z} \mathcal{C}_{\underline{1},2} \mathcal{C}_{3,4} \ldots \mathcal{C}_{2m-1,2m} \\
    \times \qty[ A_{\underline 1}, \mathbb{T}_\zeta \qty{ A_2 \ldots A_{2m} R_0 } ],
\end{multline}
where an underlined index $\underline{i}$ indicates that the corresponding quantity is evaluated in $t^+$ instead of $z_i$.

The second tool we require arises from the need of expressing Eq.~\eqref{eq:M_der} in terms of $M_{k}(t)$ with $k < m$, in order to ``close'' the differential equation.
Let us define the notation $[X,Y]_\zeta \coloneqq XY - \zeta YX$.
Since $H_S$ is quadratic and $A_i$ is linear in the system's ladder operators, $[A_i, A_j]_\zeta$ is a $c$-number for any $i$ and $j$.
This allows us to employ a Wick-like argument to reduce ordered products of system operators to a sum of smaller ordered products~\cite{Ferialdi2021Wick}.
Specifically, consider a product of the form $\mathbb{T}_\zeta \qty{ A_0 A_1 \ldots A_{2m} R_0 }$.
In case $z_0 \in \gamma_-(t)$, $A_0$ appears on the left of $R_0$ after the ordering; we can then (anti)commute it with the $A_i$ operators on its left until $A_0$ is placed at the beginning, leaving behind at each step an ordered product where $A_0$ and $A_i$ are replaced by a scalar contraction proportional to $[A_0, A_i]_\zeta$.
The same can be done in case $z_0 \in \gamma_+(t)$, but this time we (anti)commute $A_0$ with other $A_i$ operators on its right, since it is positioned on the right of $R_0$ by the ordering.
The result is~\cite{SM}
\begin{multline} \label{eq:T_lemma}
    \mathbb{T}_\zeta \qty{ A_0 A_1 \ldots A_{2m} R_0 } = \hat{A}_0 \mathbb{T}_\zeta \qty{A_1 \ldots A_{2m} R_0} \\
    - \sum_{j=1}^{2m} \zeta^{j-1} \Sigma_{0,j} \mathbb{T}_\zeta \qty{ A_1 \ldots A_{j-1} A_{j+1} \ldots A_{2m} R_0 },
\end{multline}
where $\hat{A}_0$ is a superoperator defined as $\hat{A}_0 X = A_0 X$ on $\gamma_-(t)$ and as $\hat{A}_0 X = \zeta X A_0$ on $\gamma_+(t)$, while
\begin{equation}
    \Sigma_{i,j} \coloneqq \begin{cases}
        \theta_{z_i \prec z_j} [A_i, A_j]_\zeta, & z_i \in \gamma_-(t), \\
        - \theta_{z_i \succ z_j} [A_i, A_j]_\zeta, & z_i \in \gamma_+(t),
    \end{cases}
\end{equation}
with $\theta_{z_i \prec z_j}$ being a contour Heaviside step function, that is equal to $1$ if $z_i$ precedes $z_j$ on the contour and $0$ otherwise.
Eq.~\eqref{eq:T_lemma} can be applied repeatedly in Eq.~\eqref{eq:M_der} until we exhaust all the $A_i$ operators.
After a long but straightforward calculation, one concludes that~\cite{SM}
\begin{equation}
    \dv{t} M_m(t) = \sum_{k=1}^{m} \frac{(-1)^{k+1}(2m)!!}{(2m-2k)!!} \hat{\mathcal S}_k(t) M_{m-k}(t),
\end{equation}
where $\hat{\mathcal S}_k(t)$ is a superoperator with action
\begin{multline} \label{eq:S}
    \hat{\mathcal S}_k(t) X \coloneqq \Sint_{\gamma(t)} \dd[2k-1]{\mathbf z} \mathcal{C}_{\underline{1},2} \Sigma_{2,3} \mathcal{C}_{3,4} \ldots \\
    \ldots \Sigma_{2k-2,2k-1} \mathcal{C}_{2k-1,2k} [A_{\underline 1}, \hat{A}_{2k} X].
\end{multline}
When used to calculate the time derivative of $\varrho(t)$ [see Eq.~\eqref{eq:rho_M}], we recognize the Cauchy product of two series, so that~\cite{SM}
\begin{equation} \label{eq:ME_raw}
    \dv{t} \varrho(t) = - \sum_{k=1}^\infty \hat{\mathcal S}_k(t) \varrho(t).
\end{equation}

Eq.~\eqref{eq:ME_raw} can be written in a more illuminating form if we isolate the operator dependence in Eq.~\eqref{eq:S}, which only involves two indices.
Specifically, after restoring the full notation for clarity,
\begin{equation} \label{eq:ME}
    \dv{t} \varrho(t) = \sum_{\alpha,\beta} \int_{\gamma(t)} \dd{z} \mathcal{G}_{\alpha\beta}(t^+,z) [\hat{A}_\beta(z) \varrho(t), A_\alpha(t)],
\end{equation}
where $\mathcal{G}_{p,q} = \sum_{k=1}^{\infty} \mathcal{G}^{(k)}_{p,q}$ and
\begin{subequations} \label{eq:Gseries}
    \begin{gather}
        \mathcal{G}^{(1)}_{p,q} = \mathcal{C}_{p,q}, \\
        \mathcal{G}^{(k)}_{p,q} = \Sint_{\gamma(t)} \dd[2]{\mathbf w} \mathcal{G}^{(k-1)}_{p,1} \Sigma_{1,2} \mathcal{C}_{2,q}.
    \end{gather}
\end{subequations}
If we decompose $\gamma(t)$ in its branches~\cite{SM}, we finally obtain the GME
\begin{equation} \label{eq:ME_redfield}
    \begin{split}
        \dv{\varrho(t)}{t} = \sum_{\alpha,\beta} \int_{\gamma_-(t)} \dd{z} \mathcal{G}_{\alpha\beta}(t^+,z) [A_\beta(z) \varrho(t), A_\alpha(t)] + \text{H.c.}
    \end{split}
\end{equation}
Note that the integral over $\gamma_-(t)$ reduces to an integral over $[0,t]$ in case of factorized initial state.
Thus, this is formally a Redfield equation~\cite{Breuer2007Book,deVega2017Review}, with an additional contribution from initial correlations and with the function $\mathcal{G}$ replacing the ``bare'' correlation $\mathcal{C}$.
The physical meaning of this replacement becomes evident if we realize that Eqs.~\eqref{eq:Gseries} can be written as follows after summing over the index $k$:
\begin{equation} \label{eq:dyson}
    \mathcal{G}_{p,q} = \mathcal{C}_{p,q} + \Sint_{\gamma(t)} \dd[2]{\mathbf w} \mathcal{G}_{p,1} \Sigma_{1,2} \mathcal{C}_{2,q}.
\end{equation}
This has the shape of a Dyson equation~\cite{Stefanucci2013Book}, where $\mathcal{C}$ assumes the role of a ``noninteracting'' Green's function and $\Sigma$ assumes the role of a self-energy.
The solution of the Dyson equation $\mathcal{G}$ represents a ``dressed'' correlation function that acts as a memory kernel:
$\mathcal{G}^{(k)}$ can be thought of as carrying the effect on the dynamics of $\varrho(t)$ coming from $k$ past interactions between the system and the environment, and is of order $2k$ in the coupling.
With this picture in mind, the fact that a truncation at $k=1$ yields the ``memoryless'' Redfield equation acquires a neat physical interpretation.
Note that Eq.~\eqref{eq:dyson} can be solved using well-studied tools from many-body and condensed matter physics: see, e.g., the recent works in Refs.~\cite{Talarico2019Dyson,Lamic2024Dyson}.

Once $\mathcal{G}$ is found, the GME~\eqref{eq:ME_redfield} can be solved using techniques borrowed from the literature on third quantization~\cite{Prosen2008Third,DAbbruzzo2021Self,Barthel2022Third}.
Specifically, one can always turn Eq.~\eqref{eq:ME_redfield} into a continuous differential Lyapunov equation for the covariance matrix associated with the system's Gaussian state~\cite{Serafini2023Book,Surace2022FermionGauss,Barthel2022Third,Purkayastha2022Lyapunov,Gajic1995Lyapunov}.
In the Supplemental Material~\cite{SM} we report how the Lyapunov equation looks like specifically for Eq.~\eqref{eq:ME_redfield}.


\textit{Example.}---We conclude by discussing a simple example application.
Consider as a system the two-mode fermionic Hamiltonian
\begin{equation} \label{eq:example_hs}
    H_S = \epsilon_1 a_1^\dag a_1 + \epsilon_2 a_2^\dag a_2 + \delta \qty(a_1^\dag a_2^\dag - a_1 a_2).
\end{equation}
We also couple both modes to a reservoir through standard tunneling:
\begin{gather} \label{eq:example_int}
        H_E = \sum_r \varepsilon_r c_r^\dag c_r, \quad
        V = \sum_r g_r c_r \qty(a_1^\dag + a_2^\dag) + \text{H.c.}
\end{gather}
To keep things simple, we assume Lorentzian spectral density, zero temperature, large negative chemical potential, and initial decoupling from the system, so that the ``bare'' correlation function can be taken as
\begin{equation} \label{eq:c}
    c(t) = \frac{\gamma \lambda}{2} e^{-\lambda |t|},
    \quad
    \gamma,\lambda > 0
\end{equation}
[see~\cite{SM} for the explicit link between $c(t)$ and $\mathcal{C}$].

Because of the fermionic character and the pairing $\delta$, this setup cannot be treated using the existing exact master equations mentioned at the beginning of the paper: however, it is well within the scope of our proposal.
From a physical standpoint, note that $H_S$ can be used as a toy model for two quantum dots coupled to a $p$-wave superconductor, which induces the pairing $\delta$ by proximity effect~\cite{Kitaev2001Majorana}.
With appropriate extensions, this example could serve as a starting point for studying, e.g., the exact dynamics of a $p$-wave Cooper pair splitter~\cite{Soller2012Cooper,Wang2022Singlet}.

\begin{figure}
    \centering
    \includegraphics[width=\columnwidth]{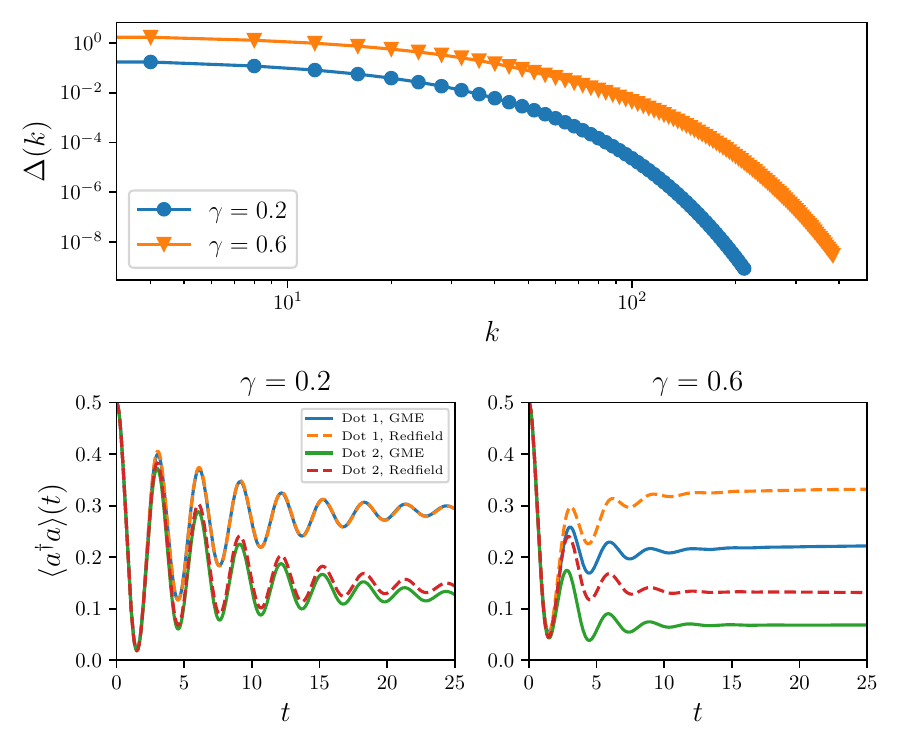}
    \caption{\label{fig:dots_sim}
        (Top) Correction to the solution of the Dyson equation at $k$th order $\Delta(k) = \max \norm*{\mathcal{G}^{(k+1)}}$, maximized over the time grid, for several values of the coupling $\gamma$.
        (Bottom) Evolution of the dots' populations, with $\varepsilon_1 = 0.5$, $\epsilon_2 = 1$, $\delta = 0.7$, and $\lambda = 1.5$.
        The Dyson equation is solved up to the higher order shown in the top plot.
    }
\end{figure}

We approach the problem numerically using the following steps~\cite{SM}.
First, we expand the Dyson equation~\eqref{eq:dyson} in its real-time components and we solve it iteratively on a discrete time grid to find $\mathcal{G}$ up to a certain order.
As shown in Fig.~\ref{fig:dots_sim}, we find that the convergence rate depends on $\gamma$: the stronger the coupling the higher the order one should stop at to achieve a target accuracy.

After this, we are able to solve the GME with a standard Runge-Kutta integrator.
In Fig.~\ref{fig:dots_sim} we report the obtained evolution of the dots' populations $\langle a_i^\dag a_i \rangle(t)$, after having initialized the system in the state $(\ket{00} + \ket{11})/\sqrt{2}$, for two different values of $\gamma$.
The competition between dissipation and pairing causes damped oscillations that eventually settle on a steady-state value.
We also plot the results predicted by the Redfield equation: as expected, relevant deviations from the GME are found at strong coupling.


\textit{Conclusions.}---We derived the GME, an exact master equation valid for all Gaussian systems which is remarkably compact and numerically efficient.
The GME is a promising tool to study the dynamics of a variety of settings that were precluded to other non-Markovian master equations, such as Kitaev chains and superconductive networks~\cite{Kitaev2001Majorana}.
The application to such models is instrumental to understand the role of non-Markovianity in the noise affecting quantum computing architectures.
From a theoretical point of view, the natural next step is to generalize the approach to system-environment pairs obeying different statistics, as this would allow to solve the longstanding problem of finding an exact master equation for the general spin-boson model~\cite{Ferialdi2017SpinBoson}.
It could also be possible to extend the analysis to interacting systems and/or non-Gaussian environments using recursive techniques~\cite{Gasbarri2018Recursive,Trushechkin2021Recursive} or self-consistent mean-field approaches~\cite{Stefanucci2013Book,Scarlatella2024SelfConsistent}.


\textit{Acknowledgments.}---We thank Guglielmo Pellitteri for his valuable suggestions on how to implement the numerical solution of the GME in Fig.~\ref{fig:dots_sim}.
We acknowledge financial support by MUR (Ministero dell'Università e della Ricerca) through the PNRR MUR project PE0000023-NQSTI.


\bibliography{bibliography.bib}


\clearpage
\onecolumngrid
\setcounter{secnumdepth}{2}
\appendix

\begin{center}
    \large{\textbf{Supplemental Material of ``Exact non-Markovian master equations:\\a generalized derivation for Gaussian systems''}}
\end{center}

This Supplemental Material provides a more thorough discussion of some of the statements made in the Letter.
The material presented here is not essential in order to grasp the concepts presented there; however, it is included to aid the interested reader in dwelling into the details.

We start in Sec.~\ref{SM:sec:contour} with a discussion of the shifted contour, showing how the system state can be conveniently written even in the presence of initial system-environment correlations.
Then, in Sec.~\ref{SM:sec:wick} we use the quadratic environment hypothesis and we apply Wick's theorem to the previously obtained expression.
In order to take the time derivative, in Sec.~\ref{SM:sec:calculus} we proceed by proving a ``fundamental theorem of calculus'' for multidimensional contour integrals.
Then, in Sec.~\ref{SM:sec:reduction} we use the quadratic system hypothesis to show how ordered products of system operators can be reduced in terms of smaller products.
This result is used in Sec.~\ref{SM:sec:closing} to close the differential equation for the state, resulting in the exact master equation presented in the Letter.
We also show in Sec.~\ref{SM:sec:physical} how to write it using physical time instead of the abstract contour time, and in Sec.~\ref{SM:sec:covariance} we formulate it as a continuous differential Lyapunov equation for the covariance matrix.
Finally, in Sec.~\ref{SM:sec:example} we provide additional details on the procedure used to simulate the paired double-dot example presented in the Letter.


\section{System state on the contour with initial correlations} \label{SM:sec:contour}

\begin{wrapfigure}{l}{0.28\textwidth}
    \centering
    \includegraphics[scale=0.18]{contour_shifted.pdf}
    \caption{\label{SM:fig:contour}
        Schematic depiction of the shifted contour $\gamma(t)$.
    }
\end{wrapfigure}
The first task is to write the system state using the shifted contour $\gamma(t)$ depicted in Fig.~\ref{SM:fig:contour} when the initial system-environment state is written as
\begin{equation} \label{SM:eq:rho_0}
    \rho_{SE}(0) = \frac{e^{-\beta H^M}}{Z_0},
    \qquad Z_0 = \Tr e^{-\beta H^M},
\end{equation}
with $\beta > 0$ and $H^M = H^M_0 + V^M$ a generic quadratic Hamiltonian.
To do that, we adapt the derivation provided in Ref.~\cite{Cavina2024Stochastic}.

Let us define the following propagator with arguments on $\gamma(t)$:
\begin{equation} \label{SM:eq:contour_propagator}
    W(z_2, z_1) \coloneqq \begin{cases}
        \mathbb{T} \exp( -i \int_{z_1}^{z_2} H(z) \dd{z} ) & z_1 \preceq z_2, \\
        \widetilde{\mathbb T} \exp( i \int_{z_2}^{z_1} H(z) \dd{z} ) & z_1 \succ z_2,
    \end{cases}
\end{equation}
where $\preceq$ is the natural ordering relation on $\gamma(t)$ with corresponding ordering and anti-ordering operations $\mathbb{T}$ and $\widetilde{\mathbb T}$, and $\int_{z_1}^{z_2} \dd{z}$ is a line integral over $\gamma(t)$ starting at $z_1$ and ending at $z_2$.
Here $H(z)$ is defined to be equal to the physical Hamiltonian $H$ on both the horizontal tracks, while it coincides with $H^M$ on the vertical one.
Unlike the standard real-time propagator, $W(z_2,z_1)$ is not unitary (unless $z_1$ and $z_2$ both lie on the same horizontal track).
However, one can easily show that the following properties still hold~\cite{Stefanucci2013Book}:
\begin{subequations}
    \begin{gather}
        W(z,z) = \mathbbm{1}, \\
        W(z_3,z_1) = W(z_3,z_2) W(z_2,z_1), \\
        \dv{W(z_2,z_1)}{z_2} = \begin{cases}
            -i H(z_2) W(z_2, z_1) & z_1 \preceq z_2, \\
            i W(z_2, z_1) H(z_1) & z_1 \succ z_2
        \end{cases} \label{SM:eq:derivative_W}
    \end{gather}
\end{subequations}
We can use $W(z_2,z_1)$ to write the initial system-environment state \eqref{SM:eq:rho_0} as
\begin{equation}
    \rho_{SE}(0) = \frac{e^{-2b H^M}}{Z_0} = \frac{W(-ib,ib)}{Z_0} = \frac{W(-ib,0) W(0,ib)}{Z_0},
\end{equation}
where $b \coloneqq \beta/2$ for brevity.
Note also that line integrals on the horizontal sections do not change after a $\pm i b$ shifting, hence the propagator $U$ defined using $H$ on the physical time satisfies the property $U(t_2, t_1) = W(t_2 \pm ib, t_1 \pm ib)$.
This allows us to write
\begin{equation}
    \rho_{SE}(t) = U(t,0) \rho_{SE}(0) U(0,t) = \frac{1}{Z_0} W(t-i b,-i b) W(-i b,0) W(0,i b) W(i b, t+i b) = \frac{1}{Z_0} W(t-i b,0) W(0,t+i b).
\end{equation}
We can now move to the interaction picture, after introducing a contour propagator $W_0(z_2,z_1)$ defined as in Eq.~\eqref{SM:eq:contour_propagator} but with $H_0(z)$ instead of $H(z)$.
Obviously, $H_0(z)$ is defined to be equal to $H_0$ on the horizontal tracks and to $H_0^M$ on the vertical one.
\begin{equation} \label{SM:eq:rho_interaction_temp}
    \varrho_{SE}(t) = U_0(0,t) \rho_{SE}(t) U_0(t,0) = \frac{1}{Z_0} W_0(-i b,t-i b) W(t-i b,0) W(0,t+i b) W_0(t+i b,i b).
\end{equation}
Here, $U_0$ is the standard propagator defined using $H_0$ on the physical time.
The following lemma is needed to proceed.
\begin{lemma}
    If $z_2 \succeq z_1$ and $z_2 \succeq \overline{z}$, then
    \begin{equation} \label{SM:eq:lemma_W}
        W(z_2,z_1) = W_0(z_2,\overline{z}) W_I(z_2,z_1;\overline{z}) W_0(\overline{z},z_1),
    \end{equation}
    where $W_I(z_2,z_1;\overline{z})$ is defined as in Eq.~\eqref{SM:eq:contour_propagator} but with $W_0(\overline{z},z) V(z) W_0(z,\overline{z})$ instead of $H(z)$. Here $V(z)$ is equal to $V$ on the horizontal tracks and to $V^M$ on the vertical one.
\end{lemma}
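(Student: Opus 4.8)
The plan is to prove the identity by turning it into a first-order linear differential equation along the contour. I would fix $z_2$ and $\overline z$ and regard both sides of \eqref{SM:eq:lemma_W} as functions of $z_1$ on the arc $\{z_1 \in \gamma(t) : z_1 \preceq z_2\}$. The first ingredient is a ``companion'' differential rule for the second argument of the propagators, namely $\dv{z_1} W_0(z_2, z_1) = i\, W_0(z_2, z_1)\, H_0(z_1)$ and $\dv{z_1} W(z_2, z_1) = i\, W(z_2, z_1)\, H(z_1)$ for $z_1 \preceq z_2$; this follows by differentiating the semigroup law $W_0(z_2, z_1) W_0(z_1, z_0) = W_0(z_2, z_0)$ and using Eq.~\eqref{SM:eq:derivative_W} together with $W_0(z,z)=\mathbbm 1$ (and similarly for $W$). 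The same rule holds for $W_I$: being defined as a contour-ordered exponential, it inherits the semigroup and derivative properties of Eq.~\eqref{SM:eq:contour_propagator} with $H(z)$ replaced by $\widetilde V(z) \coloneqq W_0(\overline z, z)\, V(z)\, W_0(z, \overline z)$, so that $\dv{z_1} W_I(z_2, z_1; \overline z) = i\, W_I(z_2, z_1; \overline z)\, \widetilde V(z_1)$ for $z_1 \preceq z_2$.

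Next I would differentiate the right-hand side $F(z_1) \coloneqq W_0(z_2, \overline z)\, W_I(z_2, z_1; \overline z)\, W_0(\overline z, z_1)$ with respect to $z_1$ by the product rule. The $\widetilde V(z_1)$ produced by the $W_I$-term meets the factor $W_0(\overline z, z_1)$ on its right, and the semigroup identity $W_0(z_1, \overline z) W_0(\overline z, z_1) = W_0(z_1,z_1)=\mathbbm 1$ collapses it: $\widetilde V(z_1)\, W_0(\overline z, z_1) = W_0(\overline z, z_1)\, V(z_1)$. Collecting the two contributions gives
\begin{equation}
    \dv{z_1} F(z_1) = i\, F(z_1)\, V(z_1) + i\, F(z_1)\, H_0(z_1) = i\, F(z_1)\, H(z_1),
\end{equation}
which is precisely the equation satisfied by $W(z_2, z_1)$ on the arc $z_1 \preceq z_2$. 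Since a first-order linear ordinary differential equation is determined by its value at one point, it suffices to check equality at the endpoint $z_1 = z_2$: there $W_I(z_2, z_2; \overline z) = \mathbbm 1$, hence $F(z_2) = W_0(z_2, \overline z)\, W_0(\overline z, z_2) = W_0(z_2,z_2) = \mathbbm 1 = W(z_2, z_2)$. By uniqueness, $F(z_1) = W(z_2, z_1)$ for all $z_1 \preceq z_2$, which is the claim.

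I expect the main difficulty to be bookkeeping rather than anything conceptual: one must track which branch (ordered versus anti-ordered) of Eqs.~\eqref{SM:eq:contour_propagator} and \eqref{SM:eq:derivative_W} applies to each of $W_0(z_2, \overline z)$, $W_0(\overline z, z_1)$, $W_I$, and $W$ according to the relative positions of $z_1$, $z_2$, $\overline z$ on $\gamma(t)$, the hypotheses $z_2 \succeq z_1$ and $z_2 \succeq \overline z$ guaranteeing that $W_I$, $W$, and $W_0(z_2,\overline z)$ all sit in the contour-ordered branch used above (the remaining factor $W_0(\overline z, z_1)$ is handled uniformly by the companion rule, which holds on either side of $\overline z$). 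Once the conventions are pinned down, the remaining steps are routine Dyson-series algebra.
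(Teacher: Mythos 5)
Your proof is correct and follows essentially the same strategy as the paper's: both sides are shown to satisfy the same first-order linear differential equation along the contour and to coincide at $z_1 = z_2$, so they agree by uniqueness. The only difference is that the paper differentiates with respect to the later argument $z_2$, using Eq.~\eqref{SM:eq:derivative_W} directly, whereas you differentiate with respect to $z_1$ and therefore first derive the (correct) companion rule for the second argument from the semigroup property.
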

\begin{proof}
    It is sufficient to show that the two sides of Eq.~\eqref{SM:eq:lemma_W} satisfy the same differential equation.
    Of course, they are both equal to the identity in case $z_1 = z_2$.
    Moreover,
    \begin{equation}
        \begin{split}
            \dv{z_2} &\qty[W_0(z_2,\overline{z}) W_I(z_2,z_1;\overline{z}) W_0(\overline{z},z_1)] \\
            &= -i H_0(z_2) W_0(z_2,\overline{z}) W_I(z_2,z_1;\overline{z}) W_0(\overline{z},z_1) -i W_0(z_2,\overline{z}) W_0(\overline{z},z_2) V(z) W_0(z_2,\overline{z}) W_I(z_2,z_1;\overline{z}) W_0(\overline{z},z_1) \\
            &= -i H(z_2) \qty[W_0(z_2,\overline{z}) W_I(z_2,z_1;\overline{z}) W_0(\overline{z},z_1)],
        \end{split}
    \end{equation}
    which should be compared with Eq.~\eqref{SM:eq:derivative_W}.
\end{proof}

This lemma allows us to write
\begin{subequations}
    \begin{gather}
        W_0(-i b,t-i b) W(t-i b,0) = W_I(t-i b,0; -i b) W_0(-i b,0), \\
        W(0,t+i b) W_0(t+i b,i b) = W_0(0,i b) W_I(0,t+i b; i b),
    \end{gather}
\end{subequations}
which can be used in Eq.~\eqref{SM:eq:rho_interaction_temp} to arrive at
\begin{equation}
    \varrho_{SE}(t) = \frac{1}{Z_0} W_I(t-i b,0; -i b) W_0(-i b,0) W_0(0,i b) W_I(0,t+i b; i b)
    = W_I(t-i b,0; -i b) \frac{e^{-2b H_0^M}}{Z_0} W_I(0,t+i b; i b).
\end{equation}
Since $W_I(t-ib,0; -ib)$ involves a line integral over $\gamma_-(t)$ and $W_I(0,t+ib; ib)$ involves a line integral over $\gamma_+(t)$, we can arrange the entire expression as a single line integral over $\gamma(t)$ if we include $e^{-2b H_0^M}/Z_0$ in the ordered products:
\begin{equation}
    \varrho_{SE}(t) = \mathbb{T} \qty{ \exp[ -i \int_{\gamma(t)} \mathcal{V}(z) \dd{z} ] \frac{e^{-2b H_0^M}}{Z_0} },
\end{equation}
where $\mathcal{V}(z) \coloneqq W_0(\overline{z},z) V W_0(z,\overline{z})$, with $\overline{z}$ being $\pm ib$ when $z \in \gamma_\pm(t)$.
This is exactly the content of Eqs.~(6)-(7) of the Letter.
Since $e^{-2b H_0^M} = e^{-2b H_S^M} e^{-2b H_E^M}$ is factorized, we can immediately perform the partial trace over the environment once the exponential is expanded, obtaining
\begin{equation} \label{SM:eq:rho_starting}
    \varrho(t) = \sum_{n=0}^\infty \frac{(-i)^n}{n!} \Sint_{\gamma(t)} \dd[n]{\mathbf z} \Tr[ \mathbb{T}\qty{ B_1 \ldots B_n \Omega_0 } ] \mathbb{T}\qty{ A_1 \ldots A_n R_0 },
\end{equation}
with the notations explained in the Letter.
The same formula holds in case the initial system-environment state is factorized as $\rho_{SE}(0) = R_0 \Omega_0$, provided the contour $\gamma(t)$ is not vertically shifted.
Since the same number of transpositions is involved in ordering the strings $B_1 \ldots B_n \Omega_0$ and $A_1 \ldots A_n R_0$, we can make the substitution $\mathbb{T} \mapsto \mathbb{T}_\zeta$, where the latter symbol is introduced above Eq.~(10) of the Letter.
This will be useful in the next section.

In the following we will use the common notation $\tau^\pm$ to indicate a point on the horizontal part of $\gamma_\pm(t)$ with value $\tau$, independently on the presence or not of the shift $\pm i b$ required to keep correlations into account.


\section{Applying Wick's theorem on the contour} \label{SM:sec:wick}

Now we make the assumption of quadratic environment linearly coupled to the system, so that we can invoke Wick's theorem.
Assuming $\Tr[B_i \Omega_0] = 0$ for every $i$, it assumes the form
\begin{subequations} \label{SM:eq:wick}
    \begin{gather}
        \Tr[\mathbb{T}_\zeta \qty{ B_1 \ldots B_{2m+1} \Omega_0 }] = 0, \\
        \Tr[\mathbb{T}_\zeta \qty{ B_1 \ldots B_{2m} \Omega_0 }] = \frac{1}{m! 2^m} \sum_{\sigma \in \mathfrak{S}_{2m}} \zeta^{N(\sigma)} \mathcal{C}_{\sigma(1),\sigma(2)} \ldots \mathcal{C}_{\sigma(2m-1),\sigma(2m)},
    \end{gather}
\end{subequations}
where $\mathfrak{S}_{2m}$ is the set of permutations of $\qty{1,\ldots,2m}$, $N(\sigma)$ is the number of inversions in $\sigma$, and
\begin{equation} \label{SM:eq:C}
    \mathcal{C}_{i,j} \coloneqq \Tr[\mathbb{T}_\zeta \qty{B_i B_j \Omega_0}] = \zeta \mathcal{C}_{j,i}
\end{equation}
is the environment's ``bare'' correlation function on the contour.
Here by ``inversion'' we mean the number of pairs $(i,j)$ such that $i < j$ but $\sigma(i) > \sigma(j)$: in this way, $\zeta^{N(\sigma)}$ is equal to the sign of $\sigma$ in the fermionic scenario.
Using Eqs.~\eqref{SM:eq:wick} in Eq.~\eqref{SM:eq:rho_starting},
\begin{equation}
    \varrho(t) = \sum_{m=0}^{\infty} \frac{(-i)^{2m}}{(2m)! m! 2^m} \sum_{\sigma \in \mathfrak{S}_{2m}} \zeta^{N(\sigma)} \Sint_{\gamma(t)} \dd[n]{\mathbf z} \mathcal{C}_{\sigma(1),\sigma(2)} \ldots \mathcal{C}_{\sigma(2m-1),\sigma(2m)} \mathbb{T}_\zeta \qty{A_1 \ldots A_{2m} R_0}.
\end{equation}
At this point we make the change of variables $w_i = z_{\sigma(i)}$ and $\beta_i = \alpha_{\sigma(i)}$, so that
\begin{equation}
    \varrho(t) = \sum_{m=0}^{\infty} \frac{(-i)^{2m}}{(2m)! m! 2^m} \sum_{\sigma \in \mathfrak{S}_{2m}} \zeta^{N(\sigma)} \Sint_{\gamma(t)} \dd[n]{\mathbf w} \mathcal{C}_{1,2} \ldots \mathcal{C}_{2m-1,2m} \mathbb{T}_\zeta \qty{A_{\sigma^{-1}(1)} \ldots A_{\sigma^{-1}(2m)} R_0}.
\end{equation}
However, one can also easily show the relation
\begin{equation}
    \mathbb{T}_\zeta \qty{ A_{\sigma^{-1}(1)} \ldots A_{\sigma^{-1}(2m)} R_0 } = \zeta^{N(\sigma^{-1})} \mathbb{T}\qty{ A_1 \ldots A_{2m} R_0 }.
\end{equation}
The factor $\zeta^{N(\sigma^{-1})}$ cancels out the factor $\zeta^{N(\sigma)}$, and nothing depends on $\sigma$ anymore.
Hence, the sum over $\sigma$ can be replaced with $(2m)!$ and
\begin{equation}
    \varrho(t) = \sum_{m=0}^{\infty} \frac{(-i)^{2m}}{m! 2^m} \Sint_{\gamma(t)} \dd[n]{\mathbf w} \mathcal{C}_{1,2} \ldots \mathcal{C}_{2m-1,2m} \mathbb{T}_\zeta \qty{A_1 \ldots A_{2m} R_0},
\end{equation}
which is the content of Eqs.~(12)-(13) of the Letter.
We rewrite them here for future convenience:
\begin{equation} \label{SM:eq:rho_after_wick}
    \varrho(t) = \sum_{m=0}^\infty \frac{(-1)^m}{m! 2^m} M_m(t),
    \qquad
    M_m(t) \coloneqq \Sint_{\gamma(t)} \dd[n]{\mathbf z} \mathcal{C}_{1,2} \ldots \mathcal{C}_{2m-1,2m} \mathbb{T}_\zeta \qty{A_1 \ldots A_{2m} R_0}.
\end{equation}
Note that this can also be written in a more suggestive exponential form:
\begin{equation}
    \varrho(t) = \mathbb{T}_\zeta \qty{ \exp[ -\frac{1}{2} \Sint_{\gamma(t)} \dd[2]{\mathbf z} \mathcal{C}_{1,2} A_1 A_2 ] R_0 }.
\end{equation}


\section{Fundamental theorem of calculus on the contour} \label{SM:sec:calculus}

In order to find an exact master equation for $\varrho(t)$ we first need to take the time derivative of Eq.~\eqref{SM:eq:rho_after_wick}.
The following theorem can be used for such purpose.
\begin{theorem}
    Let $f(\mathbf z)$ be an operator-valued function defined on $[\gamma(t)]^n$, with $n \in \mathbb{N}_+$.
    Moreover, given a vector $\mathbf{z} = (z_1, \ldots, z_n)$ and $i,j \in \qty{1,\ldots,n}$ with $i \leq j$, define the notation $\mathbf{z}_i^j \coloneqq (z_i, \ldots, z_j)$.
    Then,
    \begin{equation}
        \dv{t} \int_{\gamma(t)} \dd[n]{\mathbf z} f(\mathbf z) = \sum_{k=1}^n \int_{\gamma(t)} \dd[n-1]{\mathbf w} \qty[ f(\mathbf{w}_1^{k-1}, t^-, \mathbf{w}_k^{n-1}) - f(\mathbf{w}_1^{k-1}, t^+, \mathbf{w}_k^{n-1}) ].
    \end{equation}
\end{theorem}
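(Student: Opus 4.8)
The plan is to reduce the statement to the ordinary fundamental theorem of calculus by isolating the part of the domain $[\gamma(t)]^n$ that actually depends on $t$, and then to propagate the result to all $n$ integration slots by induction on $n$ (equivalently, by applying the Leibniz integral rule once per slot). For $n=1$ I would split $\gamma(t) = \gamma_+(t) \oplus \gamma_{\mathrm{v}} \oplus \gamma_-(t)$, with $\gamma_{\mathrm{v}}$ the vertical track from $ib$ to $-ib$ (absent, together with the shifts, in the factorized case). Parametrizing the two horizontal tracks by physical time $\tau \in [0,t]$ and reading off the orientations — $\gamma_+(t)$ is traversed from $t^+$ towards the fixed junction $0^+ = ib$, and $\gamma_-(t)$ from the fixed junction $0^- = -ib$ towards $t^-$ — one obtains
\begin{equation}
    \int_{\gamma(t)} \dd{z}\, f(z) = -\int_0^t \dd{\tau}\, f(\tau^+) + \Phi_{\mathrm{v}} + \int_0^t \dd{\tau}\, f(\tau^-) ,
\end{equation}
where $\Phi_{\mathrm{v}} \coloneqq \int_{\gamma_{\mathrm{v}}} \dd{z}\, f(z)$ is manifestly $t$-independent since only the outer endpoints $t^\pm = t \pm ib$ move with $t$. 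Differentiating term by term with the elementary fundamental theorem of calculus gives $\dv{t} \int_{\gamma(t)} \dd{z}\, f(z) = f(t^-) - f(t^+)$, which is the asserted identity for $n=1$.

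For the inductive step I would peel off the last variable, writing $\int_{\gamma(t)} \dd[n]{\mathbf z}\, f(\mathbf z) = \int_{\gamma(t)} \dd[n-1]{\mathbf{z}_1^{n-1}}\, g_t(\mathbf{z}_1^{n-1})$ with $g_t(\mathbf{z}_1^{n-1}) \coloneqq \int_{\gamma(t)} \dd{z_n}\, f(\mathbf{z}_1^{n-1}, z_n)$. The $t$-derivative then splits into two pieces. In the first piece the derivative hits the explicit $t$ carried by $g_t$; by the $n=1$ result this produces $\int_{\gamma(t)} \dd[n-1]{\mathbf{z}_1^{n-1}} [ f(\mathbf{z}_1^{n-1}, t^-) - f(\mathbf{z}_1^{n-1}, t^+) ]$, which is precisely the $k=n$ term of the right-hand side. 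In the second piece the derivative hits the outer domain with $t$ frozen inside $g_t$; applying the induction hypothesis to the $(n-1)$-variable function $\mathbf{w} \mapsto g_t(\mathbf{w})$ yields $\sum_{k=1}^{n-1} \int_{\gamma(t)} \dd[n-2]{\mathbf{w}} [ g_t(\mathbf{w}_1^{k-1}, t^-, \mathbf{w}_k^{n-2}) - g_t(\mathbf{w}_1^{k-1}, t^+, \mathbf{w}_k^{n-2}) ]$, and re-expanding $g_t$ and renaming the dummy variable $z_n \to w_{n-1}$ turns this into the $k = 1, \ldots, n-1$ terms. Adding the two pieces reconstructs the full sum, closing the induction.

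The only \emph{genuinely delicate} points are bookkeeping: getting the orientation signs right so that the terminal endpoint $t^-$ of $\gamma_-(t)$ enters with a $+$ and the initial endpoint $t^+$ of $\gamma_+(t)$ enters with a $-$, and tracking the reshuffling that inserts $t^\pm$ into slot $k$ while redistributing the remaining $n-1$ dummy variables as $\mathbf{w}_1^{k-1}$ and $\mathbf{w}_k^{n-1}$. On the analytic side one needs $f$ regular enough — continuous on a neighbourhood of the compact contour, with bounds uniform in $t$ on compact $t$-intervals — in order to differentiate under the integral sign and to discard the higher-order corrections coming from the infinitesimal extensions of the tracks. This is harmless here because extending the contour never reorders points that already lie on it and $t^\pm$ always remain, respectively, the latest and earliest point of $\gamma(t)$, so even a $\mathbb{T}_\zeta$-ordered integrand poses no problem; the ordering subtleties are dealt with separately when the theorem is later applied to $M_m(t)$.
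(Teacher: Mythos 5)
Your proof is correct and takes essentially the same route as the paper's: induction on $n$, with the base case handled by decomposing $\gamma(t)$ into its branches (the vertical track being $t$-independent) and the inductive step obtained by peeling off one integration variable and splitting the derivative via the Leibniz integral rule into an explicit-$t$ piece (the $k=n$ term) and a domain piece handled by the induction hypothesis. The paper merely writes the single-variable integral as the outer one rather than the inner one, which is an immaterial relabeling.
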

\begin{proof}
    We proceed by induction on $n$.
    In case $n=1$,
    \begin{equation}
        \dv{t} \int_{\gamma(t)} \dd{z} f(z) = \dv{t} \int_0^t \dd{\tau} \qty[ f(\tau^-) - f(\tau^+) ] = f(t^-) - f(t^+),
    \end{equation}
    where we used the fact that the vertical track, if present, does not depend on $t$ and can be omitted.
    Now assume the statement to be true for $n-1$.
    We can apply the Leibniz integral rule to write
    \begin{equation} \label{SM:eq:leibniz}
        \begin{split}
            \dv{t} \int_{\gamma(t)} \dd[n]{\mathbf z} f(\mathbf z)
            &= \dv{t} \int_{\gamma(t)} \dd{z_n} \int_{\gamma(t)} \dd[n-1]{\mathbf w} f(\mathbf{w}, z_n) \\
            &= \int_{\gamma(t)} \dd[n-1]{\mathbf w} \qty[f(\mathbf{w}, t^-) - f(\mathbf{w}, t^+)] + \int_{\gamma(t)} \dd{z_n} \dv{t} \int_{\gamma(t)} \dd[n-1]{\mathbf w} f(\mathbf{w}, z_n).
        \end{split}
    \end{equation}
    By the induction hypothesis, we have
    \begin{equation}
        \dv{t} \int_{\gamma(t)} \dd[n-1]{\mathbf w} f(\mathbf{w}, z_n)
        = \sum_{k=1}^{n-1} \int_{\gamma(t)} \dd[n-2]{\mathbf y} \qty[f(\mathbf{y}_1^{k-1}, t^-, \mathbf{y}_k^{n-2}, z_n) - f(\mathbf{y}_1^{k-1}, t^+, \mathbf{y}_k^{n-2}, z_n)],
    \end{equation}
    and therefore, if we redefine $(\mathbf{y}, z_n) \mapsto \mathbf{w}$,
    \begin{equation}
        \begin{split}
            \int_{\gamma(t)} \dd{z_n} \dv{t} \int_{\gamma(t)} \dd[n-1]{\mathbf w} f(\mathbf{w},z_n)
            = \sum_{k=1}^{n-1} \int_{\gamma(t)} \dd[n-1]{\mathbf w} \qty[f(\mathbf{w}_1^{k-1}, t^-, \mathbf{w}_k^{n-1}) - f(\mathbf{w}_1^{k-1}, t^+, \mathbf{w}_k^{n-1})].
        \end{split}
    \end{equation}
    The result follows by substituting into Eq.~\eqref{SM:eq:leibniz}.
\end{proof}

\begin{corollary}
    Let $f(\mathbf z)$ be an operator-valued function defined on $[\gamma(t)]^n$, with $n \in \mathbb{N}_+$.
    In case
    \begin{equation} \label{SM:eq:condition}
        \int_{\gamma(t)} \dd[n-1]{\mathbf w} f(\mathbf{w}_1^{k-1}, t^\pm, \mathbf{w}_k^{n-1}) = \int_{\gamma(t)} \dd[n-1]{\mathbf w} f(t^\pm, \mathbf{w})
    \end{equation}
    holds for every $k \in \qty{1,\ldots,n}$, then
    \begin{equation} \label{SM:eq:thmcalculus}
        \dv{t} \int_{\gamma(t)} \dd[n]{\mathbf z} f(\mathbf z) = n \int_{\gamma(t)} \dd[n-1]{\mathbf w} \qty[ f(t^-, \mathbf{w}) - f(t^+, \mathbf{w}) ].
    \end{equation}
\end{corollary}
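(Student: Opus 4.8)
The plan is to derive the corollary as an immediate consequence of the theorem just proved. The theorem expresses the $t$-derivative of the $n$-dimensional contour integral as a sum of $n$ terms, the $k$th of which has the special contour point $t^\pm$ inserted in the $k$th slot of $f$. The content of the corollary is simply that, under the stated symmetry hypothesis~\eqref{SM:eq:condition}, all $n$ of these terms are equal, so the sum collapses to $n$ times a single representative term with $t^\pm$ in the first slot.

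Concretely, I would start from the conclusion of the theorem,
\begin{equation*}
    \dv{t} \int_{\gamma(t)} \dd[n]{\mathbf z} f(\mathbf z) = \sum_{k=1}^n \int_{\gamma(t)} \dd[n-1]{\mathbf w} \qty[ f(\mathbf{w}_1^{k-1}, t^-, \mathbf{w}_k^{n-1}) - f(\mathbf{w}_1^{k-1}, t^+, \mathbf{w}_k^{n-1}) ],
\end{equation*}
and then invoke hypothesis~\eqref{SM:eq:condition} separately for the $t^-$ piece and the $t^+$ piece of each summand. For each fixed $k$, the hypothesis replaces $\int_{\gamma(t)} \dd[n-1]{\mathbf w}\, f(\mathbf{w}_1^{k-1}, t^\pm, \mathbf{w}_k^{n-1})$ by $\int_{\gamma(t)} \dd[n-1]{\mathbf w}\, f(t^\pm, \mathbf{w})$, which is independent of $k$. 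Hence every term in the sum over $k$ equals $\int_{\gamma(t)} \dd[n-1]{\mathbf w}\,[f(t^-,\mathbf w) - f(t^+,\mathbf w)]$, and summing $n$ identical terms yields the factor $n$ in~\eqref{SM:eq:thmcalculus}.

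There is essentially no obstacle here: the corollary is a pure bookkeeping step once the theorem is in hand, and the only thing to be careful about is matching the $\pm$ signs so that the $t^-$ and $t^+$ insertions are handled consistently and the relabeling of the $(n-1)$ remaining integration variables $\mathbf w$ is understood to be a dummy renaming (the integrand is being integrated over all of $[\gamma(t)]^{n-1}$ in each case, so the position of the omitted slot is immaterial to the value of the integral, which is exactly what~\eqref{SM:eq:condition} asserts). If anything merits a sentence of comment, it is that~\eqref{SM:eq:condition} is precisely the hypothesis that makes "the relative position of the arguments of $f$ irrelevant upon integration," so that the abbreviated form~\eqref{eq:derivmany} quoted in the Letter is justified; in the application to $M_m(t)$ this symmetry is supplied by the structure of the kernels $\mathcal{C}_{i,j}$ together with the ordering operator $\mathbb{T}_\zeta$, as asserted in the main text.
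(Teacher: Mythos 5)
Your proposal is correct and coincides with the paper's (implicit) reasoning: the corollary is stated there without a separate proof precisely because it follows by substituting hypothesis~\eqref{SM:eq:condition} into each of the $n$ summands of the theorem's conclusion, collapsing the sum to $n$ identical terms. Nothing is missing.
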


Eq.~\eqref{SM:eq:thmcalculus} is what was stated in Eq.~(14) of the Letter.
Looking at Eq.~\eqref{SM:eq:rho_after_wick}, in our case the function to be considered is
\begin{equation}
    f(\mathbf z) \equiv f(z_1, \ldots, z_{2m}) = \sum_{\bm\alpha} \mathcal{C}_{1,2} \ldots \mathcal{C}_{2m-1,2m} \mathbb{T}_\zeta \qty{A_1 \ldots A_{2m} R_0},
\end{equation}
where $\sum_{\bm\alpha}$ is a sum over the interaction indices.
We now prove that this function satisfies the condition in Eq.~\eqref{SM:eq:condition}.

To do that, it is useful to introduce the ``underline'' notation mentioned in the Letter.
Specifically, let us write
\begin{equation}
    A_{\underline{i}}^\pm \equiv A_{\alpha_i}(t^\pm),
\end{equation}
so that an underlined index indicates an evaluation in $t^\pm$.
Even though the two cases with $t^\pm$ lead to the same real-time operator, it is necessary to distinguish them for ordering purposes.
Similarly, we define
\begin{equation}
    \mathcal{C}_{\underline{i},j} \equiv \mathcal{C}_{\alpha_i,\alpha_j}(t^\pm, z_j),
    \qquad
    \mathcal{C}_{i,\underline{j}} \equiv \mathcal{C}_{\alpha_i,\alpha_j}(z_i, t^\pm).
\end{equation}
In this case, we do not need to distinguish the two cases $t^\pm$ since the the correlation function assumes the same value in the two scenarios, as one can easily verify from its definition~\eqref{SM:eq:C}.

Now, fix $k \in \qty{1,\ldots,2m}$ and let $\mathbf{w} = (w_1, \ldots, w_{k-1}, w_{k+1}, \ldots, w_{2m})$.
In case $k$ is odd, the left-hand side of Eq.~\eqref{SM:eq:condition} writes
\begin{equation}
    \Sint_{\gamma(t)} \dd[2m-1]{\mathbf w} \mathcal{C}_{1,2} \ldots \mathcal{C}_{k-2,k-1} \mathcal{C}_{\underline{k},k+1} \mathcal{C}_{k+2,k+3} \ldots \mathcal{C}_{2m-1,2m} \mathbb{T}_\zeta \qty{ A_1 \ldots A_{k-1} A_{\underline{k}}^\pm A_{k+1} A_{k+2} \ldots A_{2m} R_0 }.
\end{equation}
With $k-1$ transpositions we can bring $A_{\underline{k}}^\pm$ at the beginning of the string of ordered operators, and with subsequent $k-1$ transpositions we can bring $A_{k+1}$ immediately to the right of $A_{\underline{k}}^\pm$.
Since $2(k-1)$ is even, no additional sign factor appears, and we have
\begin{equation}
    \Sint_{\gamma(t)} \dd[2m-1]{\mathbf w} \mathcal{C}_{\underline{k},k+1} \mathcal{C}_{1,2} \ldots \mathcal{C}_{k-2,k-1} \mathcal{C}_{k+2,k+3} \ldots \mathcal{C}_{2m-1,2m} \mathbb{T}_\zeta \qty{ A_{\underline{k}}^\pm A_{k+1} A_1 \ldots A_{k-1} A_{k+2} \ldots A_{2m} R_0 },
\end{equation}
which is equivalent to the right-hand side of Eq.~\eqref{SM:eq:condition} after an appropriate change of variables.
A similar argument can be carried out in case $k$ is even.
This time, the left-hand side of Eq.~\eqref{SM:eq:condition} writes
\begin{equation}
    \Sint_{\gamma(t)} \dd[2m-1]{\mathbf w} \mathcal{C}_{1,2} \ldots \mathcal{C}_{k-3,k-2} \mathcal{C}_{k-1,\underline{k}} \mathcal{C}_{k+1,k+2} \ldots \mathcal{C}_{2m-1,2m} \mathbb{T}_\zeta \qty{ A_1 \ldots A_{k-2} A_{k-1} A_{\underline{k}}^\pm A_{k+1} \ldots A_{2m} R_0 }.
\end{equation}
With $k-1$ transpositions we can again bring $A_{\underline{k}}^\pm$ at the beginning of the string, and then with $k-2$ transpositions we can bring $A_{k-1}$ immediately to the right of $A_{\underline{k}}^\pm$.
Since in total we performed $2k-3$ transpositions, which is odd, an additional factor $\zeta$ appears.
However, we can also write $\mathcal{C}_{k-1,\underline{k}} = \zeta \mathcal{C}_{\underline{k},k-1}$, which absorbs the previous $\zeta$.
The result is
\begin{equation}
    \Sint_{\gamma(t)} \dd[2m-1]{\mathbf w} \mathcal{C}_{\underline{k},k-1} \mathcal{C}_{1,2} \ldots \mathcal{C}_{k-3,k-2} \mathcal{C}_{k+1,k+2} \ldots \mathcal{C}_{2m-1,2m} \mathbb{T}_\zeta \qty{ A_{\underline{k}}^\pm A_{k-1} A_1 \ldots A_{k-2} A_{k+1} \ldots A_{2m} R_0 },
\end{equation}
which is equivalent to the right-hand side of Eq.~\eqref{SM:eq:condition} after an appropriate change of variables.

We conclude that Eq.~\eqref{SM:eq:thmcalculus} can be used to evaluate the time derivative of $M_m(t)$ in Eq.~\eqref{SM:eq:rho_after_wick}.
Note that
\begin{equation}
    \mathbb{T}_\zeta \qty{ A_{\underline{1}}^- A_2 \ldots A_{2m} R_0 } = A_{\underline{1}} \mathbb{T}_\zeta \qty{ A_2 \ldots A_{2m} R_0 },
    \qquad
    \mathbb{T}_\zeta \qty{ A_{\underline{1}}^+ A_2 \ldots A_{2m} R_0 } = \mathbb{T}_\zeta \qty{ A_2 \ldots A_{2m} R_0 } A_{\underline{1}},
\end{equation}
where we dropped the $^\pm$ notation once we are outside the ordering operation.
Therefore,
\begin{equation} \label{SM:eq:rho_after_derivative}
    \dv{M_m(t)}{t} = 2m \Sint_{\gamma(t)} \dd[2m-1]{\mathbf z} \mathcal{C}_{\underline{1},2} \mathcal{C}_{3,4} \ldots \mathcal{C}_{2m-1,2m} \qty[ A_{\underline{1}}, \mathbb{T}_\zeta \qty{A_2 \ldots A_{2m} R_0} ],
\end{equation}
which is Eq.~(15) of the Letter.


\section{Reduction of ordered products of system operators} \label{SM:sec:reduction}

At this point we introduce the quadratic system assumption: this will allow us to expand the right-hand side of Eq.~\eqref{SM:eq:rho_after_derivative} and make $M_k(t), k < m$ appear, which is what we require to close the differential equation.
The main ingredient is contained in Eq.~(16) of the Letter, which is a modified version of a statement that can be found in Ref.~\cite{Ferialdi2021Wick}.
In this section we provide a detailed proof.

As a preliminary step, we introduce a $\zeta$-graded derivative symbol $\partial_j$, which acts as follows on a product of system operators:
\begin{equation}
    \partial_{\mu_j} \qty( X_{\mu_1} \ldots X_{\mu_j} \ldots X_{\mu_n} ) \coloneqq \zeta^{j-1} X_{\mu_1} \ldots X_{\mu_{j-1}} X_{\mu_{j+1}} \ldots X_{\mu_n}.
\end{equation}
This can be interpreted as follows: we imagine to bring $X_{\mu_j}$ to the beginning of the string of operators introducing a $\zeta$ factor for each transposition; then we eliminate $X_{\mu_j}$ by acting with the derivative on it.
The following lemma clarifies that this derivative commutes with the ordering operation.
\begin{lemma} \label{SM:lemma:partial}
    For any string of system operators $X_1, \ldots, X_n$ and any $j \in \qty{1,\ldots,n}$, one has
    \begin{equation}
        \partial_j \mathbb{T}_\zeta \qty{X_1 \ldots X_n} = \zeta^{j-1} \mathbb{T}_\zeta \qty{X_1 \ldots X_{j-1} X_{j+1} \ldots X_n}.
    \end{equation}
\end{lemma}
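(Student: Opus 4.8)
The plan is to reduce the lemma to a bookkeeping identity for the signature of a permutation under deletion of one entry. Since $H_S$ is quadratic and each $X_i$ is linear in the ladder operators, $\mathbb{T}_\zeta$ acts on a product $X_1 \cdots X_n$ merely by reordering its factors along the contour: assuming the contour times to be pairwise distinct (coincidences can be ignored under the eventual $z$-integrations), there is a unique $\pi \in \mathfrak{S}_n$ for which $X_{\pi(1)} \cdots X_{\pi(n)}$ is contour-ordered, and $\mathbb{T}_\zeta\{X_1 \cdots X_n\} = \mathrm{sgn}_\zeta(\pi)\, X_{\pi(1)} \cdots X_{\pi(n)}$, where I abbreviate $\mathrm{sgn}_\zeta(\pi) \coloneqq \zeta^{N(\pi)}$ with $N(\pi)$ the number of inversions. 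This is simply the definition of $\mathbb{T}_\zeta$, since a bubble sort requires $N(\pi)$ adjacent transpositions; thus $\mathrm{sgn}_\zeta(\pi)$ equals $1$ for bosons and the ordinary signature of $\pi$ for fermions.

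First I would evaluate both sides of the claimed identity on this sorted form. In $X_{\pi(1)} \cdots X_{\pi(n)}$ the operator labeled $j$ occupies position $p \coloneqq \pi^{-1}(j)$, so by the definition of the graded derivative, $\partial_j\, \mathbb{T}_\zeta\{X_1 \cdots X_n\} = \mathrm{sgn}_\zeta(\pi)\, \zeta^{p-1}\, X_{\pi(1)} \cdots X_{\pi(p-1)} X_{\pi(p+1)} \cdots X_{\pi(n)}$. On the other hand, the $n-1$ operators $\{X_i : i \neq j\}$ leave $\mathbb{T}_\zeta$ in exactly the same relative order, since contour order depends only on the points $z_i$; hence $\mathbb{T}_\zeta\{X_1 \cdots X_{j-1} X_{j+1} \cdots X_n\} = \mathrm{sgn}_\zeta(\pi')\, X_{\pi(1)} \cdots X_{\pi(p-1)} X_{\pi(p+1)} \cdots X_{\pi(n)}$, where $\pi'$ is the sorting permutation of the reduced index set (after the order-preserving relabeling onto $\{1, \ldots, n-1\}$, which does not change the inversion count).

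The lemma therefore amounts to the identity $\mathrm{sgn}_\zeta(\pi')\, \zeta^{j-1} = \mathrm{sgn}_\zeta(\pi)\, \zeta^{p-1}$. For $\zeta = 1$ both sides are $1$. For $\zeta = -1$ I would establish the auxiliary congruence $N(\pi') \equiv N(\pi) + (p-1) + (j-1) \pmod{2}$: move the entry in position $p$ to the front of $\pi$ using $p-1$ adjacent transpositions, then delete the resulting leading entry, which has value $j$ and so forms exactly $j-1$ inversions with the smaller values lying to its right. Substituting this gives $\mathrm{sgn}_\zeta(\pi')\, \zeta^{j-1} = \mathrm{sgn}_\zeta(\pi)\, \zeta^{(p-1) + 2(j-1)} = \mathrm{sgn}_\zeta(\pi)\, \zeta^{p-1}$, because $\zeta^{2(j-1)} = 1$; comparing with the two expressions for $\partial_j\, \mathbb{T}_\zeta\{X_1 \cdots X_n\}$ and $\mathbb{T}_\zeta\{X_1 \cdots X_{j-1} X_{j+1} \cdots X_n\}$ above then yields the lemma.

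The one delicate point is the sign bookkeeping: $\partial_j$ produces a factor $\zeta^{p-1}$ governed by the position $p$ of $X_j$ in the \emph{sorted} string, whereas the $\zeta^{j-1}$ on the right-hand side refers to the \emph{original} position $j$, and the discrepancy between $p$ and $j$ is exactly compensated by the shift $N(\pi) \to N(\pi')$ under deletion (up to an even power of $\zeta$, which is why the two sides collapse onto each other so cleanly). Everything else — the reduction of $\mathbb{T}_\zeta$ to a sorting permutation, the genericity of the contour times, and the harmless relabeling — is routine.
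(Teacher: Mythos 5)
Your proof is correct and follows essentially the same route as the paper's: both reduce the lemma to the parity identity $\zeta^{N(\pi)}\zeta^{p-1} = \zeta^{j-1}\zeta^{N(\pi')}$ relating the sorting permutation before and after deleting $X_j$, and both establish it by counting adjacent transpositions that move $X_j$ (equivalently, the entry $j$) to the front. The only cosmetic difference is that you track inversions on the sorted word while the paper factors the sorting procedure into three stages on the original string; the bookkeeping is identical.
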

\begin{proof}
    Let $\varphi \in \mathfrak{S}_n$ be the permutation that orders the string $X_1 \ldots X_n$.
    Such ordering can also be achieved in the following equivalent way: first, move $X_j$ at the beginning with $j-1$ transpositions; then, order the remaining operators $X_1 \ldots X_{j-1} X_{j+1} \ldots X_n$, calling $\phi$ the permutation that does that; finally, bring $X_j$ back into position with $\varphi^{-1}(j)-1$ transpositions.
    This remark allows us to say that
    \begin{equation}
        \zeta^{N(\varphi)} = \zeta^{j-1} \zeta^{N(\phi)} \zeta^{p-1}, \qquad p \coloneqq \varphi^{-1}(j).
    \end{equation}
    Since $\mathbb{T}_\zeta \qty{X_1 \ldots X_n} = \zeta^{N(\varphi)} X_{\varphi(1)} \ldots X_{\varphi(n)}$, by applying $\partial_j$ we remove $X_j$, appearing at position $p$, and we introduce an additional factor $\zeta^{p-1}$.
    Using the previous relation, and the fact that the remaining operators are already ordered,
    \begin{equation}
        \begin{split}
            \partial_j \mathbb{T}_\zeta\qty{X_1 \ldots X_n} &= \zeta^{N(\varphi)} \zeta^{p-1} X_{\varphi(1)} \ldots X_{\varphi(p-1)} X_{\varphi(p+1)} \ldots X_{\varphi(n)} \\
            &= \zeta^{N(\phi)} \zeta^{j-1} X_{\phi(1)} \ldots X_{\phi(n)} = \zeta^{j-1} \mathbb{T}_\zeta \qty{X_1 \ldots X_{j-1} X_{j+1} \ldots X_n},
        \end{split}
    \end{equation}
    where in the second equality it is understood that $X_j$ does not appear in the string $X_{\phi(1)} \ldots X_{\phi(n)}$.
\end{proof}

Now we are ready to prove the main result of this section.
\begin{lemma} \label{SM:lemma:reduction}
    Suppose $[A_i, A_j]_\zeta$ is a $c$-number for every choice of $i$ and $j$.
    Then,
    \begin{equation}
        \mathbb{T}_\zeta \qty{A_0 A_1 \ldots A_{2m} R_0} = \hat{A}_0 \mathbb{T}_\zeta \qty{A_1 \ldots A_{2m} R_0} - \sum_{j=1}^{2m} \zeta^{j-1} \Sigma_{0,j} \mathbb{T}_\zeta \qty{A_1 \ldots A_{j-1} A_{j+1} \ldots A_{2m} R_0},
    \end{equation}
    where
    \begin{equation}
        \hat{A}_0 X \coloneqq \begin{cases}
            A_0 X & z_0 \in \gamma_-(t), \\
            \zeta X A_0 & z_0 \in \gamma_+(t),
        \end{cases}
        \qquad
        \Sigma_{i,j} \coloneqq \begin{cases}
            \theta_{z_i \prec z_j} [A_i, A_j]_\zeta & z_i \in \gamma_-(t), \\
            -\theta_{z_i \succ z_j} [A_i, A_j]_\zeta & z_i \in \gamma_+(t).
        \end{cases}
    \end{equation}
\end{lemma}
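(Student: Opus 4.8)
\emph{The plan} is to extract $A_0$ from the contour ordering by successive graded (anti)commutations, exploiting the hypothesis that every $[A_i,A_j]_\zeta$ is a $c$-number --- this is exactly what guarantees that each elementary swap leaves behind a \emph{scalar}, which can be pulled out of the ordering symbol and of the product. I will treat the cases $z_0\in\gamma_-(t)$ and $z_0\in\gamma_+(t)$ separately, since they are mirror images, and spell out the first.

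For $z_0\in\gamma_-(t)$: recall that $R_0$, sitting on the vertical branch, separates in the fully ordered string the $\gamma_-(t)$-operators (on its left) from the $\gamma_+(t)$-operators (on its right), so $A_0$ lies to the left of $R_0$, and the operators appearing to its \emph{left} in the ordered string are precisely the $A_j$ that are later than $z_0$ on the contour, i.e.\ those with $\theta_{z_0\prec z_j}=1$. Using $A_jA_0=\zeta A_0A_j+[A_j,A_0]_\zeta$ repeatedly, I would carry $A_0$ leftwards to the head of the string; each step produces a ``propagated'' piece --- $A_0$ advanced one slot, a factor $\zeta$ collected --- and a ``contracted'' piece --- $A_0$ and that $A_j$ deleted, the scalar $[A_j,A_0]_\zeta$ factored out. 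Once $A_0$ reaches the head, the propagated term is $A_0$ times the ordered product of the remaining operators, and Lemma~\ref{SM:lemma:partial} applied to the first slot of $A_0A_1\ldots A_{2m}$ shows that the accumulated $\zeta$-factors recombine into exactly $A_0\,\mathbb{T}_\zeta\qty{A_1\ldots A_{2m}R_0}=\hat{A}_0\,\mathbb{T}_\zeta\qty{A_1\ldots A_{2m}R_0}$, as required on $\gamma_-(t)$. Each contracted term, in turn, removes $A_0$ together with some $A_j$ (necessarily one with $\theta_{z_0\prec z_j}=1$) and leaves the ordered product of the others; a second application of Lemma~\ref{SM:lemma:partial} identifies this with $\mathbb{T}_\zeta\qty{A_1\ldots A_{j-1}A_{j+1}\ldots A_{2m}R_0}$ up to a sign, and collecting the $\zeta$-powers --- those accrued in moving $A_0$ to the slot of $A_j$, the one from deleting $A_j$ from the ordered string, and the one from $[A_j,A_0]_\zeta=-\zeta[A_0,A_j]_\zeta$ --- should collapse, using that deleting $A_0$ does not change the parity of the position of $A_j$, to precisely $-\zeta^{j-1}[A_0,A_j]_\zeta=-\zeta^{j-1}\Sigma_{0,j}$. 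Because $\Sigma_{0,j}\propto\theta_{z_0\prec z_j}$, the $A_j$'s earlier than $z_0$ drop out, so the restricted sum may be written as the full sum over $j\in\qty{1,\ldots,2m}$, giving the stated identity.

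The case $z_0\in\gamma_+(t)$ is the mirror: $A_0$ sits to the right of $R_0$, the operators on its right are those with $\theta_{z_0\succ z_j}=1$, and one carries $A_0$ rightwards with $A_0A_j=\zeta A_jA_0+[A_0,A_j]_\zeta$ --- the opposite commutator here accounts for the extra minus sign in $\Sigma_{0,j}$ on $\gamma_+(t)$, and the factor $\zeta$ in $\hat{A}_0X=\zeta XA_0$ falls out of the same transposition count. The hard part, in both cases, is not conceptual --- the proof is nothing but iterated graded commutation --- but is the sign bookkeeping: one must check that the $\zeta$-powers from the transpositions, from the two uses of Lemma~\ref{SM:lemma:partial}, and from swapping $[A_j,A_0]_\zeta$ with $[A_0,A_j]_\zeta$ conspire to reproduce exactly $\zeta^{j-1}\Sigma_{0,j}$ and the branch-dependent $\hat{A}_0$, whatever the arrangement of $z_0$ and the $z_j$'s on $\gamma(t)$. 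Keeping that count straight --- in particular remembering that $\mathbb{T}_\zeta$ assigns a sign to \emph{every} transposition of its arguments, $R_0$ included, together with the parity remark above --- is where all the work goes.
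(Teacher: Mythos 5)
Your proposal follows essentially the same route as the paper's proof: write out the contour-ordered string with $R_0$ separating the two branches, carry $A_0$ to the head (or tail) by iterated graded commutation using that $[A_i,A_j]_\zeta$ is a scalar, identify the propagated term with $\hat{A}_0\,\mathbb{T}_\zeta\{\cdots\}$ and each contracted term with a shorter ordered product via Lemma~\ref{SM:lemma:partial}, and extend the restricted sum to all $j$ through the contour step function. The only part you defer --- the explicit $\zeta$-power count showing the prefactor is exactly $-\zeta^{j-1}\Sigma_{0,j}$ --- is precisely the computation the paper carries out, and your stated intermediate identities (e.g.\ $[A_j,A_0]_\zeta=-\zeta[A_0,A_j]_\zeta$ and the branch-dependent placement of $A_0$ relative to $R_0$) are all consistent with it.
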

\begin{proof}
    By definition of contour ordering, we can always find a permutation $\varphi \in \mathfrak{S}_{2m}$ and an index $k \in \qty{0,\ldots,2m}$ such that $z_{\varphi(1)} \succ \ldots \succ z_{\varphi(2m)}$ and $\mathbb{T}_\zeta \qty{A_1 \ldots A_{2m} R_0} = \zeta^{N(\varphi)} \zeta^{2m-k} A_{\varphi(1)} \ldots A_{\varphi(k)} R_0 A_{\varphi(k+1)} \ldots A_{\varphi(2m)}$.
    Let us distinguish the two cases $z_0 \in \gamma_\pm(t)$.

    In case $z_0 \in \gamma_-(t)$, we can find an index $q \in \qty{0,\ldots,k}$ such that
    \begin{equation}
        \mathbb{T}_\zeta\qty{A_0 A_1 \ldots A_{2m} R_0} = \zeta^{N(\varphi)} \zeta^{2m-k+q} A_{\varphi(1)} \ldots A_{\varphi(q)} A_0 A_{\varphi(q+1)} \ldots A_{\varphi(k)} R_0 A_{\varphi(k+1)} \ldots A_{\varphi(2m)}.
    \end{equation}
    Now we use the (anti)commutation relation $A_i A_0 = [A_i, A_0]_\zeta + \zeta A_0 A_i$ to bring $A_0$ at the beginning of the string of operators, using the fact that $[A_i, A_0]_\zeta$ is a $c$-number.
    Specifically,
    \begin{equation}
        \begin{split}
            A_{\varphi(1)} \ldots A_{\varphi(q)} A_0 &= [A_{\varphi(q)}, A_0]_\zeta A_{\varphi(1)} \ldots A_{\varphi(q-1)} + \zeta A_{\varphi(1)} \ldots A_{\varphi(q-1)} A_0 A_{\varphi(q)} \\
            &= \sum_{i=0}^{q-1} \zeta^i [A_{\varphi(q-i)}, A_0]_\zeta A_{\varphi(1)} \ldots A_{\varphi(q-i-1)} A_{\varphi(q-i+1)} \ldots A_{\varphi(q)} + \zeta^q A_0 A_{\varphi(1)} \ldots A_{\varphi(q)},
        \end{split}
    \end{equation}
    and therefore
    \begin{equation}
        \begin{split}
            \mathbb{T}_\zeta &\qty{A_0 A_1 \ldots A_{2m} R_0} = \zeta^{N(\varphi)} \zeta^{2m-k} A_0 A_{\varphi(1)} \ldots A_{\varphi(k)} R_0 A_{\varphi(k+1)} \ldots A_{\varphi(2m)} \\
            &\qquad\qquad - \zeta^{N(\varphi)} \zeta^{2m-k} \sum_{i=0}^{q-1} \zeta^{q-i-1} [A_0, A_{\varphi(q-i)}]_\zeta A_{\varphi(1)} \ldots A_{\varphi(q-i-1)} A_{\varphi(q-i+1)} \ldots A_{\varphi(k)} R_0 A_{\varphi(k+1)} \ldots A_{\varphi(2m)} \\
            &= A_0 \mathbb{T}_\zeta \qty{A_1 \ldots A_{2m} R_0} - \zeta^{N(\varphi)} \zeta^{2m-k} \sum_{i=0}^{q-1} [A_0, A_{\varphi(q-i)}]_\zeta \partial_{\varphi(q-i)} \qty[ A_{\varphi(1)} \ldots A_{\varphi(k)} R_0 A_{\varphi(k+1)} \ldots A_{\varphi(2m)} ] \\
            &= A_0 \mathbb{T}_\zeta \qty{A_1 \ldots A_{2m} R_0} - \sum_{i=0}^{q-1} [A_0, A_{\varphi(q-i)}]_\zeta \partial_{\varphi(q-i)} \mathbb{T}_\zeta \qty{A_1 \ldots A_{2m} R_0}.
        \end{split}
    \end{equation}
    The sum can be extended to all indices, provided we remember to keep a nonzero contribution only from those indices that are located later than $z_0$ on the contour:
    \begin{equation}
        \begin{split}
            \mathbb{T}_\zeta \qty{A_0 A_1 \ldots A_{2m} R_0} &= A_0 \mathbb{T}_\zeta \qty{A_1 \ldots A_{2m} R_0} - \sum_{j=1}^{2m} \theta_{z_0 \prec z_j} [A_0, A_j]_\zeta \partial_j \mathbb{T}_\zeta \qty{A_1 \ldots A_{2m} R_0} \\
            &= A_0 \mathbb{T}_\zeta \qty{A_1 \ldots A_{2m} R_0} - \sum_{j=1}^{2m} \zeta^{j-1} \theta_{z_0 \prec z_j} [A_0, A_j]_\zeta \mathbb{T}_\zeta \qty{A_1 \ldots A_{j-1} A_{j+1} \ldots A_{2m} R_0},
        \end{split}
    \end{equation}
    where in the second equality we employed Lemma~\ref{SM:lemma:partial}.

    A similar calculation can be performed in case $z_0 \in \gamma_+(t)$.
    This time, we can find $q \in \qty{k,\ldots,2m}$ such that
    \begin{equation}
        \mathbb{T}_\zeta \qty{A_0 A_1 \ldots A_{2m} R_0} = \zeta^{N(\varphi)} \zeta^{2m-k+q+1} A_{\varphi(1)} \ldots A_{\varphi(k)} R_0 A_{\varphi(k+1)} \ldots A_{\varphi(q)} A_0 A_{\varphi(q+1)} \ldots A_{\varphi(2m)}.
    \end{equation}
    Now we use the relation $A_0 A_i = [A_0, A_i]_\zeta + \zeta A_i A_0$ to bring $A_0$ at the end of the string of operators:
    \begin{equation}
        \begin{split}
            A_0 A_{\varphi(q+1)} \ldots A_{\varphi(2m)} = \sum_{i=1}^{2m-q} \zeta^{i+1} & [A_0, A_{\varphi(q+i)}]_\zeta A_{\varphi(q+1)} \ldots A_{\varphi(q+i-1)} A_{\varphi(q+i+1)} \ldots A_{\varphi(2m)} \\
            & + \zeta^{2m-q} A_{\varphi(q+1)} \ldots A_{\varphi(2m)} A_0,
        \end{split}
    \end{equation}
    so that
    \begin{equation}
        \begin{split}
            \mathbb{T}_\zeta & \qty{A_0 A_1 \ldots A_{2m} R_0} = \zeta^{N(\varphi)} \zeta^{2m-k} \zeta^{2m+1} A_{\varphi(1)} \ldots A_{\varphi(k)} R_0 A_{\varphi(k+1)} \ldots A_{\varphi(2m)} A_0 \\
            & \qquad\qquad + \zeta^{N(\varphi)} \zeta^{2m-k} \sum_{i=1}^{2m-q} \zeta^{q+i} [A_0, A_{\varphi(q+i)}]_\zeta A_{\varphi(1)} \ldots A_{\varphi(k)} R_0 A_{\varphi(k+1)} \ldots A_{\varphi(q+i-1)} A_{\varphi(q+i+1)} \ldots A_{\varphi(2m)} \\
            &= \zeta \mathbb{T}_\zeta\qty{A_1 \ldots A_{2m} R_0} A_0 + \zeta^{N(\varphi)} \zeta^{2m-k} \sum_{i=1}^{2m-q} [A_0, A_{\varphi(q+i)}]_\zeta \partial_{\varphi(q+i)} \qty[ A_{\varphi(1)} \ldots A_{\varphi(k)} R_0 A_{\varphi(k+1)} \ldots A_{\varphi(2m)} ] \\
            &= \zeta \mathbb{T}_\zeta\qty{A_1 \ldots A_{2m} R_0} A_0 + \sum_{i=1}^{2m-q} [A_0, A_{\varphi(q+i)}]_\zeta \partial_{\varphi(q+i)} \mathbb{T}_\zeta \qty{A_1 \ldots A_{2m} R_0}.
        \end{split}
    \end{equation}
    Again, we can extend the sum to all indices (provided we keep nonzero contribution only from indices that are earlier than $z_0$ on the contour) and apply Lemma~\ref{SM:lemma:partial}:
    \begin{equation}
        \begin{split}
            \mathbb{T}_\zeta \qty{A_0 A_1 \ldots A_{2m} R_0} &= \zeta \mathbb{T}_\zeta \qty{A_1 \ldots A_{2m} R_0} A_0 + \sum_{j=1}^{2m} \theta_{z_0 \succ z_j} [A_0, A_j]_\zeta \partial_j \mathbb{T}_\zeta \qty{A_1 \ldots A_{2m} R_0} \\
            &= \zeta \mathbb{T}_\zeta \qty{A_1 \ldots A_{2m} R_0} A_0 + \sum_{j=1}^{2m} \zeta^{j-1} \theta_{z_0 \succ z_j} [A_0, A_j]_\zeta \mathbb{T}_\zeta \qty{A_1 \ldots A_{j-1} A_{j+1} \ldots A_{2m} R_0}.
        \end{split}
    \end{equation}
\end{proof}


\section{Closing the differential equation} \label{SM:sec:closing}

In this section we show that if we apply the reduction lemma~\ref{SM:lemma:reduction} to Eq.~\eqref{SM:eq:rho_after_derivative} we close the differential equation and we obtain the exact master equation in Eq.~(20) of the Letter.

With a direct substitution, we first write
\begin{equation}
    \begin{split}
        \dv{M_m(t)}{t} &= 2m \Sint_{\gamma(t)} \dd[2m-1]{\mathbf z} \mathcal{C}_{\underline{1},2} \mathcal{C}_{3,4} \ldots \mathcal{C}_{2m-1,2m} \qty[A_{\underline{1}}, \hat{A}_2 \mathbb{T}_\zeta\qty{A_3 \ldots A_{2m} R_0}] \\
        &- 2m \sum_{j=3}^{2m} \zeta^{j-1} \Sint_{\gamma(t)} \dd[2m-1]{\mathbf z} \mathcal{C}_{\underline{1},2} \mathcal{C}_{3,4} \ldots \mathcal{C}_{2m-1,2m} \Sigma_{2,j} \qty[A_{\underline{1}}, \mathbb{T}_\zeta\qty{A_3 \ldots A_{j-1} A_{j+1} \ldots A_{2m} R_0}].
    \end{split}
\end{equation}
In the first term, we recognize the appearance of $M_{m-1}(t)$ if we isolate the variables $z_3, \ldots, z_{2m}$ [cfr.~\eqref{SM:eq:rho_after_wick}].
For what concerns the second term, we can actually show that the sum over $j$ is trivial to perform.
Specifically, we now prove that all terms of such sum are equal to each other.
When $j$ is even, the term is
\begin{equation}
    \begin{split}
        & \zeta^{j-1} \Sint_{\gamma(t)} \dd[2m-1]{\mathbf z} \mathcal{C}_{\underline{1},2} \mathcal{C}_{3,4} \ldots \mathcal{C}_{j-1,j} \ldots \mathcal{C}_{2m-1,2m} \Sigma_{2,j} \qty[A_{\underline{1}}, \mathbb{T}_\zeta\qty{A_3 A_4 \ldots A_{j-1} A_{j+1} \ldots A_{2m} R_0}] \\
        &= \zeta \Sint_{\gamma(t)} \dd[2m-1]{\mathbf z} \mathcal{C}_{\underline{1},2} \mathcal{C}_{j,j-1} \ldots \mathcal{C}_{4,3} \ldots \mathcal{C}_{2m-1,2m} \Sigma_{2,3} \qty[A_{\underline{1}}, \mathbb{T}_\zeta\qty{A_j A_{j-1} A_5 \ldots A_{j-2} A_4 A_{j+1} \ldots A_{2m} R_0}],
    \end{split}
\end{equation}
where we performed the change of variables $j \leftrightarrow 3$ and $(j-1) \leftrightarrow 4$.
Inside the contour ordering, we can now exchange $A_4$ with $A_{j-1}$, and then we can move $A_j$ to the immediate right of $A_{j-1}$ using $j-4$ transpositions.
In total, these are $j-5$ transpositions, which is odd.
Therefore, using also the fact that $\mathcal{C}_{j,j-1} \mathcal{C}_{4,3} = \mathcal{C}_{j-1,j} \mathcal{C}_{3,4}$, we end up with
\begin{equation}
    \Sint_{\gamma(t)} \dd[2m-1]{\mathbf z} \mathcal{C}_{\underline{1},2} \mathcal{C}_{3,4} \ldots \mathcal{C}_{j-1,j} \ldots \mathcal{C}_{2m-1,2m} \Sigma_{2,3} \qty[A_{\underline{1}}, \mathbb{T}_\zeta\qty{A_4 A_5 \ldots A_{j-1} A_j A_{j+1} \ldots A_{2m} R_0}],
\end{equation}
which can be expressed without explicit mention to $j$.
A similar manipulation can be performed in case $j$ is odd, where the term to consider is
\begin{equation}
    \begin{split}
        & \zeta^{j-1} \Sint_{\gamma(t)} \dd[2m-1]{\mathbf z} \mathcal{C}_{\underline{1},2} \mathcal{C}_{3,4} \ldots \mathcal{C}_{j,j+1} \ldots \mathcal{C}_{2m-1,2m} \Sigma_{2,j} \qty[A_{\underline{1}}, \mathbb{T}_\zeta\qty{A_3 A_4 \ldots A_{j-1} A_{j+1} \ldots A_{2m} R_0}] \\
        &= \Sint_{\gamma(t)} \dd[2m-1]{\mathbf z} \mathcal{C}_{\underline{1},2} \mathcal{C}_{j,j+1} \ldots \mathcal{C}_{3,4} \ldots \mathcal{C}_{2m-1,2m} \Sigma_{2,3} \qty[A_{\underline{1}}, \mathbb{T}_\zeta\qty{ A_j A_{j+1} A_5 \ldots A_{j-1} A_4 A_{j+2} \ldots A_{2m} R_0 }],
    \end{split}
\end{equation}
where we made the change of variables $j \leftrightarrow 3$ and $(j+1) \leftrightarrow 4$.
As before, with $j-5$ transpositions we can put the system operators back in place (a transposition to exchange $A_4$ and $A_{j+1}$, and $j-4$ transpositions to move $A_j$ to the immediate left of $A_{j+1}$), but this time $j-5$ is even and no additional $\zeta$ factor is generated in the process.

As a consequence of these arguments,
\begin{equation}
    \begin{split}
        \dv{M_m(t)}{t} &= 2m \Sint_{\gamma(t)} \dd{z_2} \mathcal{C}_{\underline{1},2} \qty[A_{\underline{1}}, \hat{A}_2 M_{m-1}(t)] \\
        &- 2m(2m-2) \Sint_{\gamma(t)} \dd[2m-1]{\mathbf z} \mathcal{C}_{\underline{1},2} \Sigma_{2,3} \mathcal{C}_{3,4} \ldots \mathcal{C}_{2m-1,2m} \qty[A_{\underline{1}}, \mathbb{T}_\zeta\qty{A_4 \ldots A_{2m} R_0}],
    \end{split}
\end{equation}
Lemma~\ref{SM:lemma:reduction} can now be applied again to reduce $\mathbb{T}_\zeta\qty{A_4 \ldots A_{2m} R_0}$, generating a term containing $M_{m-2}(t)$ and a term containing $\mathbb{T}_\zeta\qty{A_6 \ldots A_{2m} R_0}$.
The pattern repeats until we exhaust all operators inside the contour ordering.
The result is the one reported in Eqs.~(18)-(19) of the Letter, repeated here for convenience:
\begin{equation}
    \dv{M_m(t)}{t} = \sum_{k=1}^m \frac{(-1)^{k+1}(2m)!!}{(2m-2k)!!} \hat{\mathcal S}_k(t) M_{m-k}(t),
\end{equation}
where
\begin{equation} \label{SM:eq:S}
    \hat{\mathcal S}_k(t)X \coloneqq \Sint_{\gamma(t)} \dd[2k-1]{\mathbf z} \mathcal{C}_{\underline{1},2} \Sigma_{2,3} \mathcal{C}_{3,4} \Sigma_{4,5} \ldots \Sigma_{2k-2,2k-1} \mathcal{C}_{2k-1,2k} \qty[A_{\underline{1}}, \hat{A}_{2k} X].
\end{equation}

If we remember the formula $(2x)!! = x! 2^x$ and Eq.~\eqref{SM:eq:rho_after_wick},
\begin{equation}
    \dv{\varrho(t)}{t} = \sum_{m=1}^\infty \frac{(-1)^m}{m! 2^m} \dv{M_m(t)}{t} = \sum_{m=1}^\infty \sum_{k=1}^m \frac{(-1)^{m+k+1}}{(m-k)!2^{m-k}} \hat{\mathcal S}_k(t) M_{m-k}(t).
\end{equation}
We recognize here the Cauchy product of two series, in the form
\begin{equation}
    \sum_{m=1}^\infty \sum_{k=1}^m F(m,k) = \sum_{k=1}^\infty \sum_{m=k}^\infty F(m,k) = \sum_{k=1}^\infty \sum_{m=0}^\infty F(m+k,k),
\end{equation}
therefore
\begin{equation} \label{SM:eq:exactME}
    \dv{\varrho(t)}{t} = \sum_{k=1}^\infty \sum_{m=0}^\infty \frac{(-1)^{m+1}}{m!2^m} \hat{\mathcal S}_k(t) M_m(t) = -\sum_{k=1}^\infty \hat{\mathcal S}_k(t) \varrho(t),
\end{equation}
which is the exact master equation reported in Eq.~(20) of the Letter.


\section{From contour time to physical time} \label{SM:sec:physical}

If we isolate the operator dependence in Eq.~\eqref{SM:eq:S}, the exact master equation~\eqref{SM:eq:exactME} can be written as
\begin{equation}
    \dv{\varrho(t)}{t} = -\sum_{\alpha,\beta} \int_{\gamma(t)} \dd{z} \mathcal{G}_{\alpha\beta}(t^+,z) [A_\alpha(t), \hat{A}_\beta(z) \varrho(t)],
\end{equation}
where we reintroduced the full notation for the sake of clarity and we defined the ``dressed'' environment correlation function $\mathcal{G}_{\alpha\beta}(z,w) = \sum_{k=1}^\infty \mathcal{G}_{\alpha\beta}^{(k)}(z,w)$ by the recursion
\begin{equation} \label{SM:eq:recursion}
    \mathcal{G}^{(1)}_{\alpha\beta}(z,w) = \mathcal{C}_{\alpha\beta}(z,w),
    \qquad
    \mathcal{G}^{(k)}_{\alpha\beta}(z,w) = \sum_{\mu_1,\mu_2} \int_{\gamma(t)} \dd[2]{\mathbf y} \mathcal{G}^{(k-1)}_{\alpha\mu_1}(z,y_1) \Sigma_{\mu_1\mu_2}(y_1,y_2) \mathcal{C}_{\mu_2\beta}(y_2,w),
\end{equation}
which directly follows from the definition of $\hat{\mathcal S}_k(t)$ in Eq.~\eqref{SM:eq:S}.

We now want to show how to recover Eq.~(23) of the Letter by moving back to physical time.
If we decompose $\gamma(t)$ in its branches $\gamma_\pm(t)$,
\begin{equation} \label{SM:eq:physicalME}
    \dv{\varrho(t)}{t} = - \sum_{\alpha,\beta} \int_{\gamma_-(t)} \dd{z} \mathcal{G}_{\alpha\beta}(t^+, z) [A_\alpha(t), A_\beta(z) \varrho(t)] - \zeta \sum_{\alpha,\beta} \int_{\gamma_+(t)} \dd{z} \mathcal{G}_{\alpha\beta}(t^+, z) [A_\alpha(t), \varrho(t) A_\beta(z)].
\end{equation}
which is almost the same as Eq.~(23) in the Letter.
To conclude we only need to show that the second term in Eq.~\eqref{SM:eq:physicalME} is equal to the Hermitian conjugate of the first one.
To do that, we need some preliminary work.

Since the system-environment interaction $V = \sum_\alpha A_\alpha \otimes B_\alpha$ is a Hermitian operator, for every index $\alpha$ we can find a unique index $\overline{\alpha}$ such that $A^\dagger_\alpha = A_{\overline\alpha}$ and $B^\dagger_\alpha = B_{\overline\alpha}$.
Using the definition of interaction picture, one then sees that
\begin{equation}
    A^\dagger_\alpha(z) = A_{\overline\alpha}(z^*),
    \qquad
    B^\dagger_\alpha(z) = B_{\overline\alpha}(z^*),
\end{equation}
where $z^*$ stands for $z$ but flipped on the opposite branch of the contour.
This observation allows us to formulate the following symmetry property of the dressed correlation function.
\begin{lemma} \label{SM:lemma:G_symmetry}
    For every indices $\alpha,\beta$ and for every $z,w \in \gamma(t)$,
    \begin{equation} \label{SM:eq:symmetry}
        \mathcal{G}^*_{\alpha,\beta}(z,w) = \zeta \mathcal{G}_{\overline{\alpha}\overline{\beta}}(z^*, w^*).
    \end{equation}
\end{lemma}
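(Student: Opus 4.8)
The plan is to prove the symmetry property \eqref{SM:eq:symmetry} by induction on the order $k$, using the recursion \eqref{SM:eq:recursion} that defines $\mathcal{G}^{(k)}$ together with the two elementary facts already at hand: the interaction-picture conjugation rule $A^\dagger_\alpha(z) = A_{\overline\alpha}(z^*)$, $B^\dagger_\alpha(z) = B_{\overline\alpha}(z^*)$, and the bare-correlation identity \eqref{SM:eq:C}. The first step is the base case $k=1$, where $\mathcal{G}^{(1)}_{\alpha\beta} = \mathcal{C}_{\alpha\beta}$. Here I would take the complex conjugate of $\mathcal{C}_{\alpha\beta}(z,w) = \Tr[\mathbb{T}_\zeta\{B_\alpha(z) B_\beta(w)\Omega_0\}]$, use that $\Omega_0$ is a positive (Hermitian) operator and that complex conjugation of a trace of a product of operators reverses their order and daggers each factor, and then track what the contour ordering $\mathbb{T}_\zeta$ does: daggering flips each argument to the opposite branch ($z \mapsto z^*$), which reverses the contour order, so re-ordering back to $\mathbb{T}_\zeta$ costs one transposition and hence a factor $\zeta$. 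This produces exactly $\mathcal{C}^*_{\alpha\beta}(z,w) = \zeta\,\mathcal{C}_{\overline\alpha\overline\beta}(z^*,w^*)$, i.e.\ \eqref{SM:eq:symmetry} at order one. A companion fact I would establish in passing is the analogous symmetry for the self-energy kernel, $\Sigma^*_{\alpha\beta}(z,w) = \zeta\,\Sigma_{\overline\alpha\overline\beta}(z^*,w^*)$; this follows because $[A_\alpha(z),A_\beta(w)]_\zeta$ is a $c$-number whose conjugate is $[A^\dagger_\beta(w),A^\dagger_\alpha(z)]_\zeta = \zeta[A_{\overline\alpha}(z^*),A_{\overline\beta}(w^*)]_\zeta$, while the branch labels in the piecewise definition of $\Sigma$ and the contour Heaviside functions $\theta_{z\prec w}$ swap consistently under $z\mapsto z^*$, $w\mapsto w^*$ (sending $\gamma_-$ to $\gamma_+$ and reversing precedence).

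For the inductive step I assume \eqref{SM:eq:symmetry} holds at order $k-1$ and take the conjugate of the recursion:
\begin{equation}
    \mathcal{G}^{(k)*}_{\alpha\beta}(z,w) = \sum_{\mu_1,\mu_2} \int_{\gamma(t)} \dd[2]{\mathbf y}\, \mathcal{G}^{(k-1)*}_{\alpha\mu_1}(z,y_1)\, \Sigma^*_{\mu_1\mu_2}(y_1,y_2)\, \mathcal{C}^*_{\mu_2\beta}(y_2,w).
\end{equation}
Now I substitute the order-$(k-1)$ hypothesis, the $\Sigma$-symmetry, and the base-case identity for $\mathcal{C}$, producing three factors of $\zeta$ (total $\zeta^3 = \zeta$ since $\zeta^2=1$) and barred indices $\overline\mu_1,\overline\mu_2$ everywhere. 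I then change the summation variables $\mu_i \mapsto \overline\mu_i$ (a bijection) and the integration variables $y_i \mapsto y_i^*$; the key point is that sending every contour argument to the opposite branch is an orientation-reversing relabeling of $\gamma(t)$, and since the double integral $\int_{\gamma(t)}\dd[2]{\mathbf y}$ is over the closed contour it is invariant under $y\mapsto y^*$ up to at most a sign that I must check cancels (each branch $\gamma_\pm$ maps to $\gamma_\mp$ and the two orientation reversals compound to $+1$ for the pair integral; alternatively one checks it directly from the piecewise structure). After these substitutions the right-hand side becomes exactly $\zeta$ times the defining recursion for $\mathcal{G}^{(k)}_{\overline\alpha\overline\beta}(z^*,w^*)$, closing the induction. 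Summing over $k$ then gives \eqref{SM:eq:symmetry} for the full $\mathcal{G}$.

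The main obstacle I anticipate is the careful bookkeeping of \emph{signs and orderings under the branch-flip} $z\mapsto z^*$: one must verify that (i) complex conjugation of $\mathbb{T}_\zeta$-ordered traces produces precisely one factor of $\zeta$ from the re-ordering, not some $\zeta^{N}$ with $N$ depending on the permutation; (ii) the contour Heaviside functions and the piecewise $\gamma_\pm$ cases in $\Sigma_{i,j}$ transform correctly (precedence on $\gamma_-$ becoming anti-precedence on $\gamma_+$, with the explicit $-$ sign in the $\gamma_+$ branch of $\Sigma$ absorbing the flip consistently); and (iii) the measure $\dd[2]{\mathbf y}$ on the closed contour is genuinely invariant under the flip. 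None of these is deep, but each is a place where a stray sign would break the clean $\zeta^3=\zeta$ collapse, so I would present (i)--(iii) as short explicit sublemmas before running the induction. Once the branch-flip symmetry of $\mathcal{C}$ and $\Sigma$ is pinned down, the inductive step for $\mathcal{G}^{(k)}$ is essentially forced by the structure of the Dyson-type recursion.
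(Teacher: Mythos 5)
Your proposal is correct and follows essentially the same route as the paper's proof: induction on $k$ via the recursion, with the base case obtained by conjugating the trace and using $\mathcal{C}_{i,j}=\zeta\,\mathcal{C}_{j,i}$, a companion symmetry $\Sigma^*_{\alpha\beta}(z,w)=\zeta\,\Sigma_{\overline\alpha\overline\beta}(z^*,w^*)$, and the inductive step closed by the substitutions $(y_1,y_2)\mapsto(y_1^*,y_2^*)$, $(\mu_1,\mu_2)\mapsto(\overline\mu_1,\overline\mu_2)$ with $\zeta^3=\zeta$. The bookkeeping points you flag as sublemmas are exactly the steps the paper carries out inline.
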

\begin{proof}
    It is sufficient to prove that the property~\eqref{SM:eq:symmetry} is satisfied by all terms $\mathcal{G}^{(k)}_{\alpha\beta}(z,w)$ of the series, and we can do that by induction on $k$.
    In case $k=1$, we can write
    \begin{equation}
        \begin{split}
            \mathcal{C}^*_{\alpha\beta}(z,w) &= \Tr[\mathbb{T}_\zeta\qty{B_\alpha(z) B_\beta(w) \Omega_0}]^* = \Tr[\mathbb{T}_\zeta\qty{B_{\overline\beta}(w^*) B_{\overline\alpha}(z^*) \Omega_0}] \\
            &= \mathcal{C}_{\overline{\beta}\overline{\alpha}}(w^*,z^*) = \zeta \mathcal{C}_{\overline{\alpha}\overline{\beta}}(z^*,w^*).
        \end{split}
    \end{equation}
    A similar property can be proved for the self-energy.
    In fact, assuming $z \in \gamma_-(t)$, we have
    \begin{equation}
        \begin{split}
            \Sigma^*_{\alpha\beta}(z,w) &= \theta_{z \prec w} [A_\alpha(z), A_\beta(w)]^*_\zeta = \theta_{z \prec w} [A_\beta^\dagger(w), A_\alpha^\dagger(z)]_\zeta = \theta_{z \prec w} [A_{\overline\beta}(w^*), A_{\overline\alpha}(z^*)]_\zeta \\
            &= -\zeta \theta_{z \prec w} [A_{\overline\alpha}(z^*), A_{\overline\beta}(w^*)]_\zeta = -\zeta \theta_{z^* \succ w^*} [A_{\overline\alpha}(z^*), A_{\overline\beta}(w^*)]_\zeta = \zeta \Sigma_{\overline{\alpha}\overline{\beta}}(z^*,w^*),
        \end{split}
    \end{equation}
    and a similar calculation can be performed in case $z \in \gamma_+(t)$.
    Therefore, if we assume to property to hold for $k-1$ we have
    \begin{equation}
        \begin{split}
            \qty[ \mathcal{G}^{(k)}_{\alpha\beta}(z,w) ]^* &= \zeta \sum_{\mu_1,\mu_2} \int_{\gamma(t)} \dd[2]{\mathbf y} \mathcal{G}^{(k-1)}_{\overline{\alpha}\overline{\mu}_1}(z^*,y_1^*) \Sigma_{\overline{\mu}_1\overline{\mu}_2}(y_1^*, y_2^*) \mathcal{C}_{\overline{\mu}_2\overline{\beta}}(y_2^*,w^*) \\
            &= \zeta \sum_{\mu_1,\mu_2} \int_{\gamma(t)} \dd[2]{\mathbf y} \mathcal{G}^{(k-1)}_{\overline{\alpha}\mu_1}(z^*,y_1) \Sigma_{\mu_1\mu_2}(y_1, y_2) \mathcal{C}_{\mu_2\overline{\beta}}(y_2,w^*) = \zeta \mathcal{G}^{(k)}_{\overline{\alpha}\overline{\beta}}(z^*,w^*),
        \end{split}
    \end{equation}
    where in the second equality we performed a change of variables $(y_1,y_2) \mapsto (y_1^*, y_2^*)$ and $(\mu_1,\mu_2) \mapsto (\overline{\mu}_1, \overline{\mu}_2)$, which is harmless since the contour $\gamma(t)$ contains $y^*$ if and only if it contains $y$, and we are summing over all interaction indices for $\mu_1$ and $\mu_2$.
\end{proof}

We also need the following immediate fact.
\begin{lemma} \label{SM:lemma:G_border}
    For every $\alpha,\beta$ and every $z \in \gamma(t)$, we have $\mathcal{G}_{\alpha\beta}(t^-, z) = \mathcal{G}_{\alpha\beta}(t^+, z)$.
\end{lemma}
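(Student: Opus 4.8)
The plan is to reduce the claim for $\mathcal{G}$ to the same property of the bare correlation function $\mathcal{C}$, and then to read that off from the cyclicity of the trace. First I would observe that, in the recursion~\eqref{SM:eq:recursion} defining $\mathcal{G}^{(k)}_{\alpha\beta}(z,w)$, the first argument $z$ appears only inside the leftmost factor $\mathcal{G}^{(k-1)}_{\alpha\mu_1}(z,y_1)$: the remaining factors $\Sigma_{\mu_1\mu_2}(y_1,y_2)\,\mathcal{C}_{\mu_2\beta}(y_2,w)$, as well as the integration domain $[\gamma(t)]^2$, are independent of $z$. Iterating down to $\mathcal{G}^{(1)}=\mathcal{C}$, the whole dependence of $\mathcal{G}^{(k)}_{\alpha\beta}(z,w)$ on $z$ is carried by $\mathcal{C}_{\alpha\mu}(z,y_1)$, so a one-line induction on $k$ reduces the statement to its base case: it suffices to show that $\mathcal{C}_{\alpha\beta}(t^-,w)=\mathcal{C}_{\alpha\beta}(t^+,w)$ for all indices $\alpha,\beta$ and all $w\in\gamma(t)$, and summing over $k$ then yields the claim for $\mathcal{G}=\sum_k\mathcal{G}^{(k)}$.

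For this base case I would use two facts: $t^-$ is the latest point and $t^+$ the earliest point of $\gamma(t)$, yet both correspond to the same physical time $t$, so that $B_\alpha(t^-)$ and $B_\alpha(t^+)$ are literally the same interaction-picture operator. Writing $\mathcal{C}_{\alpha\beta}(z,w)=\Tr[\mathbb{T}_\zeta\qty{B_\alpha(z)B_\beta(w)\Omega_0}]$ and distinguishing $w\in\gamma_-(t)$ from $w\in\gamma_+(t)$, I would apply the contour-ordering rules — a $\gamma_-$ operator ends up to the left of $\Omega_0$, a $\gamma_+$ operator to its right, with a fermionic operator crossing $\Omega_0$ contributing a factor $\zeta$ (the very convention already built into the definition of $\hat{A}_0$) — and finally invoke the cyclicity of the trace. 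In each case one finds that $\mathcal{C}_{\alpha\beta}(t^-,w)$ and $\mathcal{C}_{\alpha\beta}(t^+,w)$ collapse onto the same physical-time object, namely $\Tr[B_\alpha(t)B_\beta(w)\Omega_0]$ for $w\in\gamma_-(t)$ and $\zeta\,\Tr[B_\beta(w)B_\alpha(t)\Omega_0]$ for $w\in\gamma_+(t)$: the transposition signs generated by moving the operator at time $t$ around the product cancel exactly against the factor collected in crossing $\Omega_0$. Conceptually this is just the standard statement that the trace closes the Keldysh contour into a ring — with antiperiodic boundary conditions in the fermionic case — on which $t^-$ and $t^+$ merely label the single point at the seam, so a contour-ordered quantity evaluated there cannot depend on the branch label; this is why the lemma is ``immediate''.

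The one delicate point is precisely this sign bookkeeping in the fermionic sector, i.e.\ applying the convention for commuting a $B$ operator past the reference state $\Omega_0$ consistently with the rest of the derivation; everything else is routine. In particular, the presence of the vertical imaginary-time branch in the shifted contour is irrelevant here, since that branch lies entirely between $t^+$ and $t^-$ in the contour order.
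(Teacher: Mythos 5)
Your proof is correct and follows essentially the same route as the paper: an induction on $k$ through the recursion~\eqref{SM:eq:recursion}, whose first argument enters only through the leftmost factor, reducing everything to the base case $\mathcal{C}_{\alpha\beta}(t^-,z)=\mathcal{C}_{\alpha\beta}(t^+,z)$. The paper leaves that base case as an easy check of the definition~\eqref{SM:eq:C}, whereas you spell out the contour-ordering and trace-cyclicity bookkeeping explicitly (correctly, including the sign cancellations and the irrelevance of the vertical track); this is just added detail, not a different argument.
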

\begin{proof}
    This follows immediately by induction from Eq.~\eqref{SM:eq:recursion} with the base case $\mathcal{C}_{\alpha\beta}(t^-,z) = \mathcal{C}_{\alpha\beta}(t^+,z)$, which can be easily verified from the definition~\eqref{SM:eq:C}.
\end{proof}
Using the two previous lemmas, we can now prove the initial claim:
\begin{equation}
    \begin{split}
        \qty{-\sum_{\alpha,\beta} \int_{\gamma_-(t)} \dd{z} \mathcal{G}_{\alpha,\beta}(t^+,z) [A_\alpha(t), A_\beta(z) \varrho(t)]}^\dag = -\sum_{\alpha,\beta} \int_{\gamma_-(t)} \dd{z} \mathcal{G}^*_{\alpha,\beta}(t^+, z) [\varrho(t) A_\beta^\dag(z), A_\alpha^\dag(t)] \\
        = -\zeta \sum_{\alpha,\beta} \int_{\gamma_-(t)} \dd{z} \mathcal{G}_{\overline{\alpha},\overline{\beta}}(t^-, z^*) [\varrho(t) A_{\overline{\beta}}(z^*), A_{\overline{\alpha}}(t)] = \zeta \sum_{\alpha,\beta} \int_{\gamma_+(t)} \dd{z} \mathcal{G}_{\alpha,\beta}(t^+, z) [\varrho(t) A_\beta(z), A_\alpha(t)].
    \end{split}
\end{equation}
Note that when $\gamma_-(t)$ is expanded in its horizontal and vertical components,
\begin{equation}
    \dv{\varrho(t)}{t} = -\sum_{\alpha,\beta} \int_0^t \dd{\tau} \mathcal{G}_{\alpha,\beta}(t^+, \tau^-) [A_\alpha(t), A_\beta(\tau) \varrho(t)] + i \sum_{\alpha,\beta} \int_0^b \dd{\lambda} \mathcal{G}_{\alpha,\beta}(t^+, -i\lambda) [A_\alpha(t), A_\beta(-i\lambda) \varrho(t)] + \text{H.c.}
\end{equation}
which reveals that the master equation has the Redfield shape, plus a contribution originated by initial system-environment correlations.

It is also convenient to express the Dyson equation,
\begin{equation} \label{SM:eq:dyson}
    \mathcal{G}(t^+, z) = \mathcal{C}(t^+, z) + \int_{\gamma(t)} \dd[2]{\mathbf w} \mathcal{G}(t^+, w_1) \Sigma(w_1, w_2) \mathcal{C}(w_2, z)
\end{equation}
in terms of physical times only.
Let us focus for simplicity on the case with factorized initial state, so that the vertical track in the contour can be ignored: similar manipulations are certainly possible in the more general case.
For a general contour function $f(z,w)$ we introduce the notations
\begin{equation}
    f^>(\tau, s) \coloneqq f(\tau^+, s^-),
    \quad
    f^<(\tau, s) \coloneqq f(\tau^-, s^+),
    \quad
    f^T(\tau, s) \coloneqq f(\tau^-, s^-),
    \quad
    f^{\widetilde T}(\tau, s) \coloneqq f(\tau^+, s^+),
\end{equation}
which are reminiscent of the standard notations used in many-body theory for the greater, lesser, time-ordered, and anti-time-ordered Keldysh components~\cite{Stefanucci2013Book}.
If we expand the contour in its branches, Eq.~\eqref{SM:eq:dyson} decomposes into the following pair of physical-time equations:
\begin{subequations} \label{SM:eq:dyson_real_pair}
    \begin{gather}
        \mathcal{G}^> = \mathcal{C}^> + \mathcal{G}^> \cdot \Sigma^T \cdot \mathcal{C}^T - \mathcal{G}^> \cdot \Sigma^< \cdot \mathcal{C}^> - \mathcal{G}^{\widetilde T} \cdot \Sigma^> \cdot \mathcal{C}^T + \mathcal{G}^{\widetilde T} \cdot \Sigma^{\widetilde T} \cdot \mathcal{C}^>, \\
        \mathcal{G}^{\widetilde T} = \mathcal{C}^{\widetilde T} + \mathcal{G}^> \cdot \Sigma^T \cdot \mathcal{C}^< - \mathcal{G}^> \cdot \Sigma^< \cdot \mathcal{C}^{\widetilde T} - \mathcal{G}^{\widetilde T} \cdot \Sigma^> \cdot \mathcal{C}^< + \mathcal{G}^{\widetilde T} \cdot \Sigma^{\widetilde T} \cdot \mathcal{C}^{\widetilde T},
    \end{gather}
\end{subequations}
where the argument $(t,\tau)$ is understood throughout and
\begin{equation}
    (X \cdot Y \cdot Z)(t,\tau) \coloneqq \int_0^t \dd[2]{\mathbf s} X(t,s_1) Y(s_1,s_2) Z(s_2,\tau).
\end{equation}

We can make some simplifications.
First of all, remembering the definition of $\Sigma$, we actually realize that $\Sigma^> = \Sigma^< = 0$ and that $\Sigma^{\widetilde T} = -\Sigma^T$.
Moreover, thanks to Lemmas~\ref{SM:lemma:G_symmetry} and~\ref{SM:lemma:G_border},
\begin{equation}
    \mathcal{G}^{\widetilde T}_{\alpha,\beta}(t,\tau) = \mathcal{G}_{\alpha,\beta}(t^+, \tau^+) = \mathcal{G}_{\alpha,\beta}(t^-, \tau^+) = \zeta \qty[\mathcal{G}_{\overline{\alpha},\overline{\beta}}(t^+, \tau^-)]^*
    = \zeta \qty[\mathcal{G}^>_{\overline{\alpha},\overline{\beta}}(t,\tau)]^*.
\end{equation}
If we define an ``effective conjugate'' operation as
\begin{equation}
    [F]^c_{\alpha,\beta} \coloneqq \zeta F^*_{\overline{\alpha},\overline{\beta}},
\end{equation}
we can then write Eqs.~\eqref{SM:eq:dyson_real_pair} as a single equation:
\begin{equation}
    \mathcal{G}^> = \mathcal{C}^> + \mathcal{G}^> \cdot \Sigma^T \cdot \mathcal{C}^T - [\mathcal{G}^>]^c \cdot \Sigma^T \cdot \mathcal{C}^>.
\end{equation}
Using a series representation $\mathcal{G}^> = \sum_{k=1}^\infty \mathcal{G}^>_k$, this can readily be transformed in the recursion
\begin{equation}
    \mathcal{G}^>_1 = \mathcal{C}^>,
    \qquad
    \mathcal{G}^>_k = \mathcal{G}^>_{k-1} \cdot \Sigma^T \cdot \mathcal{C}^T - [\mathcal{G}^>_{k-1}]^c \cdot \Sigma^T \cdot \mathcal{C}^>,
\end{equation}
Note that in terms of the known standard functions
\begin{equation}
    \sigma_{\alpha\beta}(\tau,s) \coloneqq [A_\alpha(\tau), A_\beta(s)]_\zeta,
    \qquad
    c_{\alpha\beta}(\tau, s) \coloneqq \Tr[B_\alpha(\tau) B_\beta(s) \Omega_0]
\end{equation}
we simply have
\begin{subequations}
    \begin{gather}
        \Sigma^T_{\alpha\beta}(\tau, s) = \theta(s-\tau) \sigma_{\alpha\beta}(\tau,s), \\
        \mathcal{C}^>_{\alpha\beta}(\tau, s) = c_{\alpha\beta}(\tau, s),
        \qquad
        \mathcal{C}^<_{\alpha\beta}(\tau, s) = \zeta c_{\beta\alpha}(s,\tau), \\
        \mathcal{C}^T_{\alpha\beta}(\tau, s) = \theta(\tau-s) \mathcal{C}^>_{\alpha\beta}(\tau, s) + \theta(s-\tau) \mathcal{C}^<_{\alpha\beta}(\tau, s).
    \end{gather}
\end{subequations}


\section{Master equation for the covariance matrix} \label{SM:sec:covariance}

Since the setting we are considering is quadratic in the system variables, we expect to be able to turn the master equation into an evolution equation for the covariance matrix associated with the system's state.
It is indeed possible to do so.
As a first step, let us revert the interaction picture:
\begin{equation}
    \begin{split}
        \dv{\rho(t)}{t} + i[H_S, \rho(t)] &= - \sum_{\alpha,\beta} \int_{\gamma_-(t)} \dd{z} \mathcal{G}_{\alpha\beta}(t^+, z) U_0(t,0) [A_\alpha(t), A_\beta(z) \varrho(t)] U_0(0,t) + \text{H.c.} \\
        &= -\sum_{\alpha,\beta} \int_{\gamma_-(t)} \dd{z} \mathcal{G}_{\alpha\beta}(t^+, z) [A_\alpha, A_\beta(z;t) \rho(t)] + \text{H.c.}
    \end{split}
\end{equation}
where $A_\beta(z; t) \coloneqq W_0(t^-,z) A_\beta W_0(z, t^-)$.
In case $z = \tau^-$ lies on the horizontal track, $A_\beta(\tau^-; t) = A_\beta(\tau-t)$.
Now, if we impose $\Tr[\dot{\varrho} O] = \Tr\,[\varrho \dot{O}]$, a simple computation reveals that the adjoint master equation for an observable $O$ is given by
\begin{equation} \label{SM:eq:adjoint}
    \dv{O(t)}{t} - i[H_S, O(t)] = -\sum_{\alpha,\beta} \int_{\gamma_-(t)} \dd{z} \mathcal{G}_{\alpha\beta}(t^+, z) [O(t), A_\alpha] A_\beta(z; t) + \text{H.c.}
\end{equation}

It is now convenient to introduce self-adjoint operators $\{w_i\}$ constructed using linear combinations of system ladder operators, such that
\begin{equation} \label{SM:eq:Omega}
    [w_i, w_j]_\zeta = \Omega_{ij} \mathbbm{1},
    \quad
    \Omega^T = \Omega^* = -\zeta\Omega,
\end{equation}
with $\Omega$ being a matrix of scalar coefficients.
In the bosonic case, we could take position and momentum operators and $\Omega$ would be proportional the standard symplectic matrix~\cite{Serafini2023Book}.
In the fermionic case, we could take Majorana operators and $\Omega$ would be the identity~\cite{Surace2022FermionGauss}.
We want to write the adjoint master equation corresponding to the covariance matrix~\cite{Barthel2022Third}
\begin{equation}
    \Gamma_{ij} \coloneqq \Tr[(w_i w_j + \zeta w_j w_i) \varrho] = 2 \Tr[w_i w_j \varrho] - \Omega_{ij},
    \qquad
    \Gamma^T = \Gamma^* = \zeta \Gamma.
\end{equation}

Recall that the operators $A_\alpha$ were assumed to be linear in the system ladder operators.
This means that
\begin{equation}
    A_\alpha = \sum_i \mathbb{A}_{\alpha i}w_i,
    \qquad
    A_\beta(z;t) = \sum_j \mathbb{A}_{\beta j}(z; t) w_j
\end{equation}
for appropriate coefficients $\mathbb{A}$ and $\mathbb{A}(z;t)$.
Moreover, since $H_S$ is quadratic in ladder operators, we can always write
\begin{equation}
    H_S = \sum_{i,j} h_{i j} w_i w_j,
    \qquad
    h^\dag = h = \zeta h^T.
\end{equation}
Therefore, Eq.~\eqref{SM:eq:adjoint} writes
\begin{equation} \label{SM:eq:adjoint_w}
    \dv{O(t)}{t} - i \sum_{i,j} h_{i j} [w_i w_j, O(t)] = \sum_{i,j} M_{i j}(t) [O(t), w_i] w_j + \text{H.c.},
\end{equation}
where
\begin{equation} \label{SM:eq:M}
    M(t) \coloneqq -\int_{\gamma_-(t)} \dd{z} \mathbb{A}^T \mathcal{G}(t^+, z) \mathbb{A}(z; t).
\end{equation}
Let us now write Eq.~\eqref{SM:eq:adjoint_w} using the observable $\hat{\Gamma}_{kq} \coloneqq 2w_k w_q - \Omega_{kq}\mathbbm{1}$, so that $\Gamma_{kq} = \Tr\,[\hat{\Gamma}_{kq} \varrho]$.
With a straightforward calculation that employs the (anti)commutation relations~\eqref{SM:eq:Omega}, one arrives at the following continuous differential Lyapunov equation:
\begin{equation} \label{SM:eq:lyapunov}
    \dv{\Gamma}{t} = X\Gamma + \Gamma X^T + Y,
\end{equation}
where
\begin{equation}
    X \coloneqq 2 \Re[\Omega M] - 2i \Omega h,
    \qquad
    Y \coloneqq -\Omega(M + \zeta M^T + M^\dag + \zeta M^*)\Omega.
\end{equation}


\section{Details on the example}\label{SM:sec:example}

In this final section we provide additional details on how the numerical computation for the example presented in the Letter has been carried out.
Let us start by considering a generic fermionic quadratic system Hamiltonian:
\begin{equation}
    H_S = \sum_{n,m} \qty[J_{nm} a_n^\dag a_m + \frac{1}{2} \Delta_{nm} a_n^\dag a_m^\dag - \frac{1}{2} \Delta^*_{nm} a_n a_m],
    \qquad
    J^\dag = J, \quad \Delta^T = -\Delta.
\end{equation}
If we define Majorana operators through
\begin{equation}
    w_{2n-1} = \frac{1}{\sqrt{2}} \qty(a_n^\dag + a_n),
    \qquad
    w_{2n} = \frac{i}{\sqrt{2}} \qty(a_n^\dag - a_n),
\end{equation}
we can also write $H_S = \sum_{i,j} h_{i j} w_i w_j$, where $h$ is a pure imaginary skew-symmetric matrix such that the block corresponding to the indices $(2n-1,2m-1)$, $(2n-1,2m)$, $(2n,2m-1)$, $(2n,2m)$ is found to be
\begin{equation}
    2i \begin{bmatrix}
       \Im J_{n m} + \Im \Delta_{n m} & \Re J_{n m} - \Re \Delta_{n m} \\
       - \Re J_{n m} - \Re\Delta_{n m} & \Im J_{n m} - \Im\Delta_{n m}
    \end{bmatrix}.
\end{equation}

For the environment, we take a single bath of fermionic oscillators $H_E = \sum_r \varepsilon_r c_r^\dag c_r$ initially kept in its ground state with negative large chemical potential, so that $\langle c_r c_r^\dag \rangle = 1$.
For the interaction, $V = \sum_{n \in \mathcal{I}} \sum_r \qty[g_r c_r a_n^\dag + g^*_r a_n c_r^\dag]$, where $\mathcal{I}$ is the set of system indices establishing which part of the system is coupled to the bath.
Using known results in multi-linear fermionic algebra~\cite{Cirio2022Influence}, this interaction can be put in the following tensor form:
\begin{equation}
    V = B \otimes A^\dag + B^\dag \otimes A,
    \qquad
    A = \sum_{n \in \mathcal{I}} a_n,
    \quad
    B = \sum_r g_r c_r \mathcal{P}_E,
\end{equation}
where $\mathcal{P}_E = (-1)^{\sum_r c_r^\dag c_r}$ is the parity operator of the environment.

With the ordering $B_1 = B$, $B_2 = B^\dag$, we immediately obtain
\begin{equation}
    c(\tau, s) = \begin{bmatrix}
        0 & 1 \\ 0 & 0
    \end{bmatrix} \sum_r |g_r|^2 e^{-i\varepsilon_r(\tau-s)} \simeq
    \begin{bmatrix}
        0 & 1 \\ 0 & 0
    \end{bmatrix} \frac{\gamma\lambda}{2} e^{-\lambda |\tau-s|}
\end{equation}
in case we assume the environment to have a Lorentzian spectral density.
For the system part, we first need to evolve $A_1 = A^\dag$ and $A_2 = A$ in interaction picture.
To do that, we use the Bogoliubov transformation that diagonalizes $H_S$:
\begin{equation}
    a_n = \sum_k \qty[\mathcal{A}_{nk} b_k + \mathcal{B}_{nk} b_k^\dag],
    \qquad
    H_S = \sum_k \omega_k b_k^\dag b_k + E_0.
\end{equation}
The quantities $\mathcal{A}$, $\mathcal{B}$, and $\omega_k$ can be found with a straightforward matrix diagonalization in Nambu space.
One can in fact prove that~\cite{Xiao2009Quadratic}
\begin{equation}
    \begin{bmatrix}
        J & \Delta \\ \Delta^\dag & -J^T
    \end{bmatrix}
    = U \begin{bmatrix}
        \Omega & 0 \\ 0 & -\Omega
    \end{bmatrix} U^\dag,
    \quad
    U = \begin{bmatrix}
        \mathcal{A} & \mathcal{B} \\
        \mathcal{B}^* & \mathcal{A}^*
    \end{bmatrix},
    \quad
    \Omega = \mathrm{diag}(\{\omega_k\}).
\end{equation}
Now, we simply have
\begin{equation}
    A(t) = \sum_k \qty[\varphi_k e^{-i\omega_k t} b_k + \chi_k e^{i\omega_k t} b_k^\dag],
    \qquad
    \varphi_k \coloneqq \sum_{n \in \mathcal{I}} \mathcal{A}_{n k},
    \quad
    \chi_k \coloneqq \sum_{n \in \mathcal{I}} \mathcal{B}_{n k}.
\end{equation}
Therefore, we find with some calculation that
\begin{equation}
    \sigma(\tau, s) = \sum_k \qty{
        \begin{bmatrix}
            \varphi^*_k \chi^*_k & |\chi_k|^2 \\
            |\varphi_k|^2 & \varphi_k \chi_k
        \end{bmatrix} e^{-i\omega_k(\tau-s)}
        + \begin{bmatrix}
            \varphi^*_k \chi^*_k & |\varphi_k|^2 \\
            |\chi_k|^2 & \varphi_k \chi_k
        \end{bmatrix} e^{i\omega_k(\tau-s)}
    }.
\end{equation}
With this information, the Dyson equation can be solved (assuming the initial state to be factorized, for simplicity).
From a practical standpoint, we fixed a discrete time grid and we approximated the integrals using a double composite trapezoidal rule.

Once $\mathcal{G}$ is found, the Lyapunov equation for the covariance matrix can be approached.
First we need to find the coefficient matrices $\mathbb{A}$ and $\mathbb{A}(z;t)$.
Since in the present case the vertical track is missing, we can just focus on a coefficient $\mathbb{A}(\tau)$ calculated from the standard interaction-picture operator $A_\alpha(\tau)$.
Then, we would simply have $\mathbb{A} = \mathbb{A}(0)$ and $\mathbb{A}(\tau^-; t) = \mathbb{A}(\tau - t)$.
One uses the following basis transformation between Bogoliubov and Majorana operators:
\begin{equation}
    \begin{bmatrix}
        b_k \\ b_k^\dag
    \end{bmatrix}
    = \frac{1}{\sqrt{2}}
    \sum_n \begin{bmatrix}
        \Phi^*_{n k} & i \Psi^*_{n k} \\
        \Phi_{n k} & -i \Psi_{n k}
    \end{bmatrix}
    \begin{bmatrix}
        w_{2n-1} \\ w_{2n}
    \end{bmatrix},
\end{equation}
where $\Phi \coloneqq \mathcal{A} + \mathcal{B}^*$ and $\Psi \coloneqq \mathcal{A} - \mathcal{B}^*$~\cite{DAbbruzzo2021Kitaev}.
Substituting above, one concludes that
\begin{gather}
    \mathbb{A}_{1,2n-1}(t) = \frac{1}{\sqrt 2} \sum_k \qty[\Phi^*_{n k} \chi^*_k e^{-i\omega_k t} + \Phi_{n k} \varphi^*_k e^{i\omega_k t}],
    \qquad
    \mathbb{A}_{2,2n-1}(t) = \mathbb{A}^*_{1,2n-1}(t), \\
    \mathbb{A}_{1,2n}(t) = \frac{i}{\sqrt 2} \sum_k \qty[\Psi^*_{n k} \chi^*_k e^{-i\omega_k t} - \Psi_{n k} \varphi^*_k e^{i\omega_k t}],
    \qquad
    \mathbb{A}_{2,2n}(t) = \mathbb{A}^*_{1,2n}(t).
\end{gather}
With this knowledge, we can approximate $M(t)$ in Eq.~\eqref{SM:eq:M} on the time grid using a composite trapezoidal rule.
Finally, we vectorize the Lyapunov equation~\eqref{SM:eq:lyapunov} and we solve it using a standard Runge-Kutta integration routine.

Once $\Gamma$ has been obtained, real-space quantities can be calculated by inverting the transformation to Majorana operators.
For example, the population of the $n$th mode (which is what we used in the Letter) is found to be (remember that $\Gamma$ is pure imaginary):
\begin{equation}
    \langle a_n^\dag a_n \rangle = \frac{1}{2}\qty(1 + i \Gamma_{2n-1,2n}).
\end{equation}

\end{document}